\newcommand{\ABC}{\mathrm{ABC}}
\newcommand{\hi}{\mathrm{hi}}
\newcommand{\lo}{\mathrm{lo}}
\newcommand{\mf}{\mathrm{mf}}
\newtheorem{thm}{Theorem}
\newtheorem{cor}[thm]{Corollary}
\newtheorem{defn}[thm]{Definition}
\newtheorem{lem}[thm]{Lemma}
\newtheorem{prop}[thm]{Proposition}
\title{Efficient Multifidelity Likelihood-Free Bayesian Inference with Adaptive Computational Resource Allocation}
\author[1]{Thomas P Prescott} 
\author[2]{David J Warne} 
\author[3]{Ruth E Baker}
\affil[1]{Alan Turing Institute, London NW1 2DB, United Kingdom}
\affil[2]{QUT Centre for Data Science, Queensland University of Technology, Brisbane, QLD 4000, Australia}
\affil[3]{Wolfson Centre for Mathematical Biology, Mathematical Institute, University of Oxford, Oxford OX2 6GG, United Kingdom}
\begin{document}
\maketitle

\onehalfspacing

\begin{abstract}
Likelihood-free Bayesian inference algorithms are popular methods for calibrating the parameters of complex, stochastic models, required when the likelihood of the observed data is intractable.
These algorithms characteristically rely heavily on repeated model simulations.
However, whenever the computational cost of simulation is even moderately expensive, the significant burden incurred by likelihood-free algorithms leaves them unviable in many practical applications.
The multifidelity approach has been introduced (originally in the context of approximate Bayesian computation) to reduce the simulation burden of likelihood-free inference without loss of accuracy, by using the information provided by simulating computationally cheap, approximate models in place of the model of interest.
The first contribution of this work is to demonstrate that multifidelity techniques can be applied in the general likelihood-free Bayesian inference setting.
Analytical results on the optimal allocation of computational resources to simulations at different levels of fidelity are derived, and subsequently implemented practically.
We provide an adaptive multifidelity likelihood-free inference algorithm that learns the relationships between models at different fidelities and adapts resource allocation accordingly, and demonstrate that this algorithm produces posterior estimates with near-optimal efficiency. 
\end{abstract}

\section{Introduction}\label{s:intro}
Across domains in engineering and science, parametrised mathematical models are often too complex to analyse directly.
Instead, many \emph{outer-loop applications} \citep{Peherstorfer2018}, such as model calibration, optimization, and uncertainty quantification, rely on repeated simulation to understand the relationship between model parameters and behaviour.
In time-sensitive and cost-aware applications, the typical computational burden of such simulation-based methods makes them impractical. 
Multifidelity methods, reviewed by \citet{Peherstorfer2016,Peherstorfer2018}, are a family of approaches that exploit information gathered from simulations, not only of a single model of interest, but also of additional approximate or surrogate models.
In this article, the term \emph{model} refers to the underlying mathematical abstraction of a system in combination with the computer code used to implement simulations.
Thus, `model approximation' may refer to mathematical simplifications and/or approximations in numerical methods.
The fundamental challenge when implementing multifidelity techniques is the allocation of computational resources between different models, for the purposes of balancing a characteristic trade-off between maintaining accuracy and saving computational burden.

In this work, we consider a specific outer-loop application that arises in Bayesian statistics, the goal of which is to calibrate a parametrised model against observed data.
Bayesian inference uses the likelihood of the observed data to update a prior distribution on the model parameters into a posterior distribution, according to Bayes's rule.
In the situation where the likelihood of the data cannot be calculated, we rely on so-called likelihood-free methods that provide estimates of the likelihood by comparing model simulations to data.
For example, approximate Bayesian computation (ABC) is a widely-known likelihood-free inference technique \citep{ABC2020,Sunnaaker2013},
where the likelihood is typically estimated as a binary value, recording whether or not the distance between a simulation and the observed data falls within a given threshold.
Other likelihood-free methods are also available, such as pseudo-marginal methods and Bayesian synthetic likelihoods (BSL).
In this work, we develop a generalised likelihood-free framework for which ABC, pseudo-marginal and BSL can be expressed as specific cases, as described in \Cref{s:likelihood-free}.

The significant cost of likelihood-free inference has motivated several successful proposals for improving the efficiency of likelihood-free samplers, such as (in the specific context of ABC) ABC-MCMC~\citep{Marjoram2003} and ABC-SMC~\citep{Sisson2007,Toni2009,Moral2011}.
These approaches aim to efficiently explore parameter space by avoiding the proposal of low-likelihood parameters, reducing the required number of expensive simulations required and reducing the ABC rejection rate.
However, an `orthogonal' technique for improving the efficiency of likelihood-free inference is to instead ensure that each simulation-based likelihood estimate is, on average, less computationally expensive to generate.

In previous work, \citet{Prescott2020,Prescott2021} investigated multifidelity approaches to likelihood-free Bayesian inference \citep{Cranmer2020}, with a specific focus on ABC~\citep{ABC2020,Sunnaaker2013}.
Suppose that there exists a low-fidelity approximation to the parametrised model of interest, and that the approximation is relatively cheap to simulate.
Monte Carlo estimates of the posterior distribution, with respect to the likelihood of the original high-fidelity model, can be constructed using the simulation outputs of the low-fidelity approximation.
\citet{Prescott2020} showed that using the low-fidelity approximation introduces no further bias, so long as, for any parameter proposal, there is a positive probability of simulating the high-fidelity model to check and potentially correct a low-fidelity likelihood estimate.
The key to the success of the multifidelity ABC (MF-ABC) approach is to choose this positive probability to be suitably small, thereby simulating the original model as little as possible, while ensuring it is large enough that the variance of the resulting Monte Carlo estimate is suitably small.
The result of the multifidelity approach is to reduce the expected cost of estimating the likelihood for each parameter proposal in any Monte Carlo sampling algorithm.
In subsequent work, \citet{Prescott2021} showed that this approach integrates with sequential Monte Carlo (SMC) sampling for efficient parameter space exploration \citep{Toni2009,Moral2011,Drovandi2011}.
Thus, the synergistic effect of combining multifidelity and with SMC to improve the efficiency of ABC has been demonstrated.

Multifidelity ABC can be compared with previous techniques for exploiting model approximation in ABC, such as Preconditioning ABC \citep{Warne2021:JCGS}, Lazy ABC (LZ-ABC) \citep{Prangle2016}, and Delayed Acceptance ABC (DA-ABC) \citep{Christen2005,Everitt2021}.
Similarly to sampling techniques such as SMC, the preconditioning approach seeks to explore parameter space more efficiently, by proposing parameters for high-fidelity simulation with greater low-fidelity posterior mass.
In contrast, each of MF-ABC, LZ-ABC, and DA-ABC seeks to make each parameter proposal quicker to evaluate, on average, by using the output of the low-fidelity simulation to directly decide whether to simulate the high-fidelity model.
In both LZ-ABC and DA-ABC, a parameter proposal is either (a) rejected early, based on the simulated output of the low-fidelity model, or (b) sent to a high-fidelity simulation, to make a final decision on ABC acceptance or rejection.
The distinctive aspect of MF-ABC is that step (a) is different; it is not necessary to reject early to avoid high-fidelity simulation.
Instead the low-fidelity simulation can be used to make the accept/reject decision directly.
In both DA-ABC and MF-ABC, the decision between (a) or (b) is based solely on whether the low-fidelity simulation would be accepted or rejected.
In contrast, LZ-ABC allows for a much more generic decision of whether to simulate the high-fidelity model, requiring an extensive exploration of practical tuning methods.

In this paper, we will show that the multifidelity approach can be applied to any simulation-based likelihood-free inference methodology, including but not limited to ABC.
We achieve this by developing a generalised framework for likelihood-free inference, and deriving a multifidelity method to operate in this framework.
A successful multifidelity likelihood-free inference algorithm requires us to determine how many simulations of the high-fidelity model to perform, based on the parameter value and the simulated output of the low-fidelity model.
We provide theoretical results and practical, automated tuning methods to allocate computational resources between two models, designed to optimise the performance of multifidelity likelihood-free importance sampling.

\subsection{Outline}
In \Cref{s:LikelihoodFree} we review likelihood-free Bayesian inference through constructing a generalised likelihood-free framework.
\Cref{s:mf} shows how the theory underpinning MF-ABC, as introduced by \citet{Prescott2020}, can be applied in this general likelihood-free Bayesian context.
We analyse the performance of the resulting multifidelity likelihood-free importance sampling algorithm.
The main results of this paper are set out in \Cref{s:performance}, in which we determine the optimal allocation of computational resources between the two models to achieve the best possible performance of multifidelity inference.
\Cref{s:mfimplementation} explores how to practically allocate computation between model fidelities, by adaptively evolving the allocation in response to learned relationships between simulations at each fidelity across parameter space.
We illustrate adaptive multifidelity inference by applying the algorithm to a simple biochemical network motif in \Cref{s:eg}.
We show that, using a low-fidelity Michaelis--Menten approximation together with the exact model (both simulated using the exact algorithm of \citet{Gillespie1977}) our adaptive implementation of multifidelity likelihood-free inference can achieve a quantifiable speed-up in constructing posterior estimates to a specified variance and with no additional bias.
Code for this example, developed in Julia~1.6.2~\citep{JULIA}, is available at \href{https://github.com/tpprescott/mf-lf}{\texttt{github.com/tpprescott/mf-lf}}.
Finally, in \Cref{s:end} we discuss how greater improvements may be achieved for more challenging inference tasks.

\section{Likelihood-free inference}
\label{s:LikelihoodFree}

We consider a stochastic model of the data generating process, defined by a distribution with parametrised probability density function, $f(\cdot \mid \theta)$, where the parameter vector $\theta$ takes values in a parameter space $\Theta$.
For any $\theta$, the model induces a probability density, denoted $f(y \mid \theta)$, on observable outputs, with $y$ taking values in an output space $\mathcal Y$.
We note that the model is usually implemented in computer code to allow simulation, through which outputs $y \in \mathcal Y$ can be generated.
We write $y \sim f(\cdot\mid\theta)$ to denote simulation of the model $f$ given parameter values $\theta$.
Taking the experimentally observed data $y_0 \in \mathcal Y$, we define the likelihood function to be a function of $\theta$ using the density, $\mathcal L(\theta) = f(y_0 \mid \theta)$, of the observed data under this model.

Bayesian inference updates prior knowledge of the parameter values, $\theta \in \Theta$, which we encode in a prior distribution with density $\pi(\theta)$.
The information provided by the experimental data, encoded in the likelihood function, $\mathcal L(\theta)$, is combined with the prior using Bayes' rule to form a posterior distribution, with density
\begin{equation}
\label{eq:posterior}
\pi(\theta \mid y_0) = \frac{\mathcal L(\theta) \pi(\theta)}{Z},
\end{equation}
where $Z = \int \mathcal L(\theta) \pi(\theta) ~\mathrm d\theta$ normalises $\pi(\cdot \mid y_0)$ to be a probability distribution on $\Theta$.
For a given, arbitrary, integrable function $G : \Theta \rightarrow \mathbb R$, we take the goal of the inference task as the production of a Monte Carlo estimate of the posterior expectation, 
\begin{equation}
\label{eq:Phibar} 
\bar G = \mathbf E(G \mid y_0) = \int_\Theta G(\theta) \pi(\theta \mid y_0)~\mathrm d\theta,
\end{equation}
conditioned on the observed data.

\subsection{Approximating the likelihood with simulation}\label{s:likelihood-free}
In most practical settings, models tend to be sufficiently complicated that calculating $\mathcal L(\theta) = f(y_0 \mid \theta)$ for $\theta \in \Theta$ is intractable.
In this case, we exploit the ability to produce independent simulations from the model,
\begin{subequations}
\label{eq:ell}
\begin{align}
\mathbf y &= (y_1, \dots, y_K), \\
y_k &\sim f(\cdot \mid \theta).
\end{align}
In the following, we will slightly abuse notation by using the shorthand $\mathbf y \sim f(\cdot \mid \theta)$ to represent $K$ independent, identically distributed draws from the parametrised distribution $f(\cdot\mid\theta)$.

Given the observed data, $y_0$, we can define a real-valued function, referred to as a \emph{likelihood-free weighting},
\begin{equation}
\omega : (\theta, \mathbf y) \mapsto \mathbb R,
\end{equation}
\end{subequations}
which varies over the joint space of parameter values and simulation outputs.
Here, $\omega$ is a function of a parameter value, $\theta$, and a vector, $\mathbf y$, of stochastic simulations.
For a fixed $\theta$, we can take conditional expectations of $\omega(\theta, \mathbf y)$ over the probability density of simulations, $f(\mathbf y\mid\theta)$, to define an \emph{approximate likelihood function},
\begin{subequations}
\label{eq:lhf}
\begin{equation}
L_\omega(\theta) = \mathbf E(\omega \mid \theta) = \int \omega(\theta, \mathbf y) f(\mathbf y \mid \theta) ~\mathrm d\mathbf y,
\end{equation}
where $L_\omega(\theta)$ is assumed to be approximately equal to the modelled likelihood function, $\mathcal L(\theta)$, up to a constant of proportionality.
Note that, given $\theta$, the random value of the likelihood-free weighting, $\omega(\theta, \mathbf y)$, determined by $K$ stochastic simulations, $\mathbf y \sim f(\cdot\mid\theta)$, is a Monte Carlo estimate of the approximate likelihood function, $L_\omega(\theta)$.

The approximate likelihood function is used to define the \emph{likelihood-free approximation to the posterior},
\begin{equation}
\label{eq:posterior_approximation}
\pi_\omega(\theta \mid y_0) = \frac{L_\omega(\theta) \pi(\theta)}{Z_\omega},
\end{equation}
where the normalisation constant $Z_\omega = \int L_\omega(\theta) \pi(\theta)~\mathrm d\theta$ ensures that $\pi_\omega$ is a probability distribution.
The likelihood-free approximation to the posterior, $\pi_\omega(\theta \mid y_0)$, subsequently induces a biased estimate of $\bar G$, given by
\begin{equation}
\bar G_\omega = \mathbf E_{\pi_\omega}(G \mid y_0) = \int G(\theta) \pi_\omega(\theta \mid y_0) ~\mathrm d\theta.
\end{equation}
\end{subequations}
In this situation, the success of likelihood-free inference depends on ensuring that the likelihood-free weighting, $\omega(\theta, \mathbf y)$ is chosen such that the squared difference $(\bar G - \bar G_\omega)^2$ between the posterior expectation, $\bar G$, and its likelihood-free approximation, $\bar G_\omega$, is as small as possible.

\subsubsection{Example: ABC}
Approximate Bayesian computation (ABC) is a widely-used example of likelihood-free inference, where
\[
\omega_\ABC(\theta, \mathbf y) = \frac{1}{K} \sum_{k=1}^K \mathbf I(d(y_k, y_0) \leq \epsilon),
\]
is the random fraction of $K$ simulations, $y_k$, which are within a distance of $\epsilon$ of the observed data, as measured by the metric $d$.
\citet{Sisson2007,ABC2020} note that it is standard practice to choose $K=1$.
Taking the conditional expectation of $\omega_\ABC$, given $\theta$, this likelihood-free weighting induces the ABC approximation to the likelihood, \[ L_\ABC (\theta) = \mathbf E(\omega_\ABC \mid \theta) = \mathbf P(d(y, y_0) \leq \epsilon \mid \theta),\] for any $K$.
Under appropriate choices of $d$ and $\epsilon$, the approximate likelihood function, $L_\ABC(\theta)$, may be considered approximately proportional to the likelihood, $\mathcal L(\theta)$.

\subsubsection{Example: Bayesian synthetic likelihood}
The simplest implementation of the Bayesian synthetic likelihood approach replaces the true likelihood with a Monte Carlo likelihood-free weighting based on a Gaussian density,
\[
\omega_{\mathrm{SL}}(\theta, \mathbf y)
= 
\mathcal N(y_0 ; \mu(\mathbf y), \Sigma^2(\mathbf y)),
\]
with random mean,
$\mu = \sum_k y_k / K$,
and covariance,
$\Sigma^2 = \sum_k (y_k - \mu)(y_k - \mu)^T / K$,
given by the empirical mean and covariance of $K$ simulations.
Taking conditional expectations of $\omega_{\mathrm{SL}}$, given $\theta$, induces the Bayesian synthetic likelihood, $L_{\mathrm{SL}}(\theta) = \mathbf E(\omega_{\mathrm{SL}}(\theta, \mathbf y) \mid\theta)$, as an approximation of the likelihood, $\mathcal L(\theta)$.

\subsubsection{Example: Pseudo-marginal method}
For the pseudo-marginal approach, introduced by \citet{Andrieu2009}, we suppose that there exists a simulation-based estimator, $\omega(\theta, \mathbf y)$, such that the conditional expectation $L_\omega(\theta) = \mathbf E(\omega \mid \theta)$ is an \emph{unbiased} estimate of $\mathcal L(\theta) = f(y_0 \mid \theta)$.
For example, following \citet{Warne2020}, suppose that an intractable density $f(y\mid\theta)$ arising from a stochastic model can be decomposed into an underlying latent model, $x \sim g(\cdot\mid\theta)$, and an observation model, $y \sim h(\cdot\mid\theta,x)$, such that $f(y\mid\theta) = \int h(y\mid\theta,x) g(x\mid\theta) \mathrm dx$.
Assume that the probability densities $h(y\mid\theta,x)$ of the observation model can be calculated.
Then, for simulations $x_k \sim g(\cdot\mid\theta)$ of the latent model, where $k=1,\dots,K$, we can write
\[
\omega(\theta, \mathbf x) = \frac{1}{K} \sum_{k} h(y_0 \mid \theta, x_k),
\]
as a likelihood-free weighting.
Taking expectations over $\mathbf x \sim g(\cdot\mid\theta)$, we have $L_\omega(\theta) = \mathbf E(\omega \mid \theta) = f(y_0 \mid \theta) = \mathcal L(\theta)$. 
Thus, $L_\omega(\theta)$ is an \emph{exact approximation}~\citep{Drovandi2019}.

\subsection{Likelihood-free importance sampling}
\label{s:is}
A simple approach to estimating the likelihood-free approximate posterior mean, $\bar G_\omega$, is to use importance sampling.
We assume that parameter proposals $\theta_i \sim q(\cdot)$, for $i=1,\dots,N$, can be sampled from a given importance distribution, the support of which must include the prior support.
In practice, we need only know importance density values, $q(\theta)$, up to a multiplicative constant, but for simplicity we assume that $q(\theta)$ is known.
We also assume that we have access to the prior probability density, $\pi(\theta)$.

The \emph{likelihood-free importance sampling} algorithm is described in \Cref{alg:is}. 
This algorithm requires the specification of an importance distribution, $q$, and a likelihood-free weighting, $\omega(\theta, \mathbf y)$, with conditional expectation, $L_\omega(\theta) = \mathbf E(\omega\mid\theta)$. 
The output of \Cref{alg:is}, $\hat G$, is an estimate of the likelihood-free approximate posterior expectation, $\bar G_\omega = \mathbf E_{\pi_\omega}(G \mid y_0)$.
In \Cref{thm:consistent}, we prove the standard result that $\hat G$ is a \emph{consistent} estimate of $\bar G_\omega$, and quantify the dominant behaviour of the mean squared error in the limit of large sample sizes, $N \rightarrow \infty$.
For notational simplicity, we define the function $\Delta(\theta) = G(\theta) - \bar G_\omega$ to recentre $G$ at the approximate posterior mean, and denote the Monte Carlo error between $\hat G$ and $\bar G_\omega$ as the estimated mean value of $\Delta$, denoted by $\hat \Delta = \hat G - \bar G_\omega$.

\begin{algorithm}
\caption{Likelihood-free importance sampling.}
\label{alg:is}
\begin{algorithmic}
\Require Prior, $\pi$; importance distribution, $q$; likelihood-free weighting, $\omega$; model $f(\cdot \mid \theta)$; stop condition, \texttt{stop}; target function, $G$.
\Statex
\State Set counter $i=0$.
\Repeat 
	\State Increment counter $i \gets i+1$;
	\State Sample $\theta_i \sim q(\cdot)$;
	\State Simulate $\mathbf y_i \sim f(\cdot \mid \theta_i)$;
	\State Calculate weight, 
		\begin{equation}
		\label{eq:w}
			w_i = w(\theta_i, \mathbf y_i) = \frac{\pi(\theta_i)}{q(\theta_i)} \omega(\theta_i, \mathbf y_i);
		\end{equation}
\Until{\texttt{stop} = \texttt{true}}
\State \Return Weighted sum,
\[
\hat G = \sum_{i=1}^N w_i G(\theta_i) \bigg/ \sum_{j=1}^N w_j.
\]
\end{algorithmic}
\end{algorithm}

\begin{thm}
\label{thm:consistent}
For the weighted sample values $(\theta_i, w_i)$ produced in each iteration of \Cref{alg:is}, let $w$ denote the random value of the weight $w_i$, and let $\Delta$ denote the random value of $\Delta(\theta_i)$.
The mean squared error (MSE) of the output, $\hat G$, of \Cref{alg:is}, as an estimator for the approximate posterior expectation, $\bar G_\omega$, is given to leading order by
\begin{equation}
\label{eq:MSE}
\mathbf E \left( \hat \Delta^2 \right) = \left[ \frac{\mathbf E \left(w^2 \Delta^2\right)}{\mathbf E(w)^2} \right] \frac{1}{N} + O \left(\frac{1}{N^2}\right).
\end{equation}
Thus, $\hat G$ is a consistent estimator of $\bar G_\omega$.
\end{thm}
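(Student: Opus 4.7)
The plan is a standard self-normalised importance sampling analysis, organised as (i) identifying the first moments of a single weighted sample, (ii) deducing consistency from the strong law of large numbers, and (iii) Taylor-expanding the ratio estimator about its mean to extract the leading-order MSE.

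For a single draw $(\theta_i,\mathbf y_i)$ with $\theta_i\sim q$ and $\mathbf y_i\sim f(\cdot\mid\theta_i)$, I would first condition on $\theta_i$ and integrate out $\mathbf y_i$. This replaces $\omega(\theta_i,\mathbf y_i)$ with its conditional mean $L_\omega(\theta_i)$, and the factor $\pi(\theta_i)/q(\theta_i)$ cancels against the sampling density, giving
\[
\mathbf E(w) = \int L_\omega(\theta)\pi(\theta)\,\mathrm d\theta = Z_\omega,
\qquad
\mathbf E(wG) = \int G(\theta)L_\omega(\theta)\pi(\theta)\,\mathrm d\theta = Z_\omega \bar G_\omega.
\]
Since $\Delta = G - \bar G_\omega$ by definition, it follows that $\mathbf E(w\Delta)=0$. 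Writing $B_N = N^{-1}\sum_i w_i$ and $C_N = N^{-1}\sum_i w_i\Delta(\theta_i)$, an elementary rearrangement of $\hat G$ gives $\hat\Delta = C_N/B_N$. Because the $N$ samples are i.i.d., the strong law of large numbers yields $B_N\to Z_\omega$ and $C_N\to 0$ almost surely, so $\hat G \to \bar G_\omega$, which establishes consistency.

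For the leading-order MSE I would Taylor-expand $1/B_N^2$ about $\mu = Z_\omega$:
\[
\frac{1}{B_N^2} = \frac{1}{\mu^2} - \frac{2(B_N-\mu)}{\mu^3} + \frac{3(B_N-\mu)^2}{\mu^4} - \cdots,
\]
valid on the event $|B_N-\mu|<\mu$. Multiplying through by $C_N^2$ and taking expectations gives
\[
\mathbf E(\hat\Delta^2) = \frac{\mathbf E(C_N^2)}{\mu^2} - \frac{2\,\mathbf E\bigl(C_N^2(B_N-\mu)\bigr)}{\mu^3} + \cdots.
\]
Since the summands $w_i\Delta(\theta_i)$ are i.i.d.\ with mean zero, $\mathbf E(C_N^2) = \mathbf E(w^2\Delta^2)/N$, reproducing the stated leading term. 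Expanding the triple sum defining $C_N^2(B_N-\mu)$ and using independence to eliminate every contribution containing an unmatched index leaves only the $N$ diagonal terms $i=j=k$, yielding $\mathbf E(C_N^2(B_N-\mu)) = \mathbf E(w^2\Delta^2(w-\mu))/N^2$; analogous bookkeeping shows the higher-order contributions are $O(N^{-2})$ or smaller, so \eqref{eq:MSE} follows.

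The main obstacle is rigorously justifying that the Taylor remainder is $O(N^{-2})$ in expectation, rather than merely in probability, because the geometric expansion converges only on $\{|B_N-\mu|<\mu\}$. I would handle this in the standard way by partitioning the expectation over the event $A_N = \{|B_N-\mu|<\mu/2\}$, on which the remainder is bounded uniformly by a constant multiple of $C_N^2(B_N-\mu)^k/\mu^{k+2}$, and its complement, whose probability decays faster than any polynomial in $1/N$ by a Chebyshev/Bernstein-type estimate under finite-moment assumptions on $w$ and $w\Delta$. Combined with the tacit assumption that $q$ dominates $\pi$ and produces a square-integrable weight, this tail control upgrades the in-probability expansion to the in-expectation statement, completing the proof.
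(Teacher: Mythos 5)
Your proposal is correct and follows essentially the same route as the paper: both are delta-method Taylor expansions of the self-normalised ratio estimator about the mean of its denominator, using $\mathbf E(w)=Z_\omega$ and $\mathbf E(w\Delta)=0$ to kill the zeroth- and first-order terms and leave $\mathbf E(w^2\Delta^2)/(N\,\mathbf E(w)^2)$ at leading order. Your version recentres the numerator first (writing $\hat\Delta = C_N/B_N$) so that only a univariate expansion in $B_N$ is needed, and it is somewhat more careful than the paper in justifying that the Taylor remainder is $O(N^{-2})$ in expectation via truncation on the event $\{|B_N-\mu|<\mu/2\}$, but these are refinements of the same argument rather than a different proof.
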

\begin{proof}
The Monte Carlo estimate produced by \Cref{alg:is}, $\hat G = R/S$, is the ratio of two random variables: the weighted sum, $R=\sum_{i=1}^N w(\theta_i,\mathbf y_i) G(\theta_i)$, and the normalising sum, $S = \sum_{i=1}^N w(\theta_i, \mathbf y_i)$.
We write the function $\Phi(r,s) = (r/s - \bar G_\omega)^2$, and note that the MSE is the expected value of the function $\hat \Delta^2 = \Phi(R,S)$.
Using the delta method, we take expectations of the second-order Taylor expansion of $\Phi(R,S)$ about $(\mu_R, \mu_S) = (\mathbf E(R), \mathbf E(S))$, to give
\begin{align*}
\mathbf E\left( \hat \Delta^2 \right) = \mathbf E\left( \Phi(R,S) \right)
&= \Phi(\mu_R, \mu_S) \\
&\quad + \frac{1}{\mu_S^2} \left[ \mathrm{Var}(R) + \left( 2 \bar G_\omega - \frac{4\mu_R}{\mu_S} \right) \mathrm{Cov}(R,S) + \left( \frac{3 \mu_R - 2 \bar G_\omega \mu_S}{\mu_S^2} \right) \mu_R  \mathrm{Var}(S) \right] \\ 
&\quad + O \left( \frac{\mathbf E \left( \left( \left( R-\mu_R \right)+ \left( S-\mu_S \right) \right)^3 \right)}{\mu_s^3} \right).
\end{align*}
Taking expectations with respect to $N$ independent draws of $(\theta, \mathbf y)$ with density $f(\mathbf y \mid \theta) q(\theta)$, it is straightforward to write
\begin{align*}
\mu_R = \mathbf E(R) &= N \mathbf E(w G) = N Z_\omega \bar G_\omega, \\
\mu_S = \mathbf E(S) &= N \mathbf E(w) = N Z_\omega,
\end{align*}
where we recall that $\mathbf E(w) = Z_\omega = \int L_\omega(\theta) \pi(\theta)~\mathrm d\theta$ is the normalising constant in \Cref{eq:posterior_approximation}.
We substitute these expectations into the Taylor expansion of $\mathbf E(\hat \Delta^2)$, noting that the leading-order term, $\Phi(\mu_R, \mu_S)$, is zero.
Thus, we can write the dominant behaviour of the MSE as
\begin{align*}
\mathbf E\left(\hat \Delta^2 \right) &= \frac{1}{N^2 Z_\omega^2} \left[ \mathrm{Var}(R) - 2\bar G_\omega \mathrm{Cov}(R,S) + \bar G_\omega^2 \mathrm{Var}(S) \right] + O\left( \frac{\mathbf E \left( \left( \left( R-\mu_R \right)+ \left( S-\mu_S \right) \right)^3 \right)}{N^3} \right) \\
&= \frac{1}{N^2 Z_\omega^2} \mathrm{Var}(R - \bar G_\omega S) + O\left( \frac{\mathbf E \left( \left( \left( R-\mu_R \right)+ \left( S-\mu_S \right) \right)^3 \right)}{N^3} \right),
\end{align*}
as $N\rightarrow\infty$.
Substituting into this expression the definitions of $R$ and $S$ as summations of $N$ independent identically distributed random variables, we have
\[
\mathbf E(\hat \Delta^2) = \frac{1}{N^2 Z_\omega^2} N \mathrm{Var}(w\Delta) + O\left( \frac{N}{N^3} \right) ,
\]
and \Cref{eq:MSE} follows, on noting that $\mathbf E(w\Delta) = 0$ and that $\mathbf E(w) = Z_\omega$.
\end{proof}

\Cref{thm:consistent} determines the leading-order behaviour of the MSE of the output of \Cref{alg:is} in terms of sample size.
We can also quantify the performance of this algorithm in terms of how the MSE decreases with increasing the overall computational budget.

\begin{cor}
\label{cor:performance}
Let the computational cost of each iteration of \Cref{alg:is} be denoted by the random variable~$C$.
The leading order behaviour of the MSE of $\hat G$ as an estimate of $\bar G_\omega$ is
\begin{equation}
\label{eq:performance}
\mathbf E \left( \hat \Delta^2 \right) = \left[ \frac{\mathbf E(C)\mathbf E \left(w^2 \Delta^2\right)}{\mathbf E(w)^2} \right] \frac{1}{C_{\mathrm{tot}}} + O\left(\frac{1}{C_{\mathrm{tot}}^2}\right),
\end{equation}
as the total simulation budget $C_{\mathrm{tot}} \rightarrow \infty$.
\end{cor}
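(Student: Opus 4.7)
The plan is to reduce \Cref{cor:performance} to \Cref{thm:consistent} by re-expressing the sample size $N$ in terms of the total computational budget $C_{\mathrm{tot}}$. The implicit setup is that the \texttt{stop} condition in \Cref{alg:is} is a budget constraint: iterations continue until the cumulative cost $C_{\mathrm{tot}} = \sum_{i=1}^N C_i$ first exceeds (or equals) a target budget, where $C_1, C_2, \dots$ are i.i.d.\ copies of $C$ (each $C_i$ being the cost of iteration $i$, which depends on $\theta_i$ and $\mathbf y_i$). Under this interpretation $N$ is a random stopping time, and the task is to translate a statement about $N \to \infty$ into one about $C_{\mathrm{tot}} \to \infty$.

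First, I would invoke the elementary renewal theorem (or equivalently the strong law of large numbers applied to $C_{\mathrm{tot}}/N$): almost surely, $C_{\mathrm{tot}}/N \to \mathbf E(C)$ as the budget grows, so $N = C_{\mathrm{tot}}/\mathbf E(C) + o(C_{\mathrm{tot}})$. Substituting this relationship into \Cref{eq:MSE} replaces the leading factor $1/N$ by $\mathbf E(C)/C_{\mathrm{tot}}$, which directly yields the coefficient $\mathbf E(C) \mathbf E(w^2 \Delta^2)/\mathbf E(w)^2$ in \Cref{eq:performance}. The subleading $O(1/N^2)$ term from \Cref{thm:consistent} becomes $O(1/C_{\mathrm{tot}}^2)$ after the same substitution, giving the stated error bound.

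The main obstacle is making the substitution $1/N \mapsto \mathbf E(C)/C_{\mathrm{tot}}$ rigorous rather than heuristic. Concretely, one needs to control $\mathbf E(\hat\Delta^2 \mid N) - \mathbf E(C)\,\mathbf E(w^2\Delta^2)/(\mathbf E(w)^2 C_{\mathrm{tot}})$ after taking the outer expectation over the random $N$. The cleanest route is to apply Wald-type identities together with the central-limit bound $N - C_{\mathrm{tot}}/\mathbf E(C) = O_{\mathbb P}(\sqrt{C_{\mathrm{tot}}})$, so that $\mathbf E(1/N) = \mathbf E(C)/C_{\mathrm{tot}} + O(1/C_{\mathrm{tot}}^2)$ provided $C$ has a finite second moment and is bounded away from zero. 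The fluctuations of $N$ then contribute only at order $C_{\mathrm{tot}}^{-2}$, which is absorbed into the claimed remainder.

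Finally, I would note that the corollary does not actually require the specific stopping rule above: essentially the same argument works for any stopping rule for which $N/C_{\mathrm{tot}} \to 1/\mathbf E(C)$ with controlled fluctuations, e.g.\ a fixed $N = \lceil C_{\mathrm{tot}}/\mathbf E(C)\rceil$ (in which case $C_{\mathrm{tot}}$ is random but $N$ is deterministic, and the argument simplifies). This observation justifies reporting the performance bound in the budget-indexed form of \Cref{eq:performance}, which is the natural figure of merit for subsequent comparison with the multifidelity variant developed in \Cref{s:mf,s:performance}.
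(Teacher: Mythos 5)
Your proposal is correct and takes essentially the same route as the paper: the paper's proof is simply the observation that $N \sim C_{\mathrm{tot}}/\mathbf E(C)$ followed by substitution into \Cref{eq:MSE}. The additional renewal-theoretic justification you supply (Wald-type identities and the $O_{\mathbb P}(\sqrt{C_{\mathrm{tot}}})$ control of the fluctuations of the stopping time $N$) is a legitimate and welcome tightening of a step the paper treats heuristically, but it does not change the underlying argument.
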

\begin{proof}
As the given computational budget increases, $C_{\mathrm{tot}} \rightarrow \infty$, the Monte Carlo sample size that can be produced in that budget increases on the order of $N \sim C_{\mathrm{tot}} / \mathbf E(C)$.
On substituting this expression into \Cref{eq:MSE}, the result follows.
\end{proof}

We can use the leading-order coefficient of $1/C_{\mathrm{tot}}$ in \Cref{eq:performance} to quantify the performance of likelihood-free importance sampling.
Importantly, this expression explicitly depends on the expected computational cost, $C$, of each iteration of \Cref{alg:is}.
In the importance sampling context, the optimal importance distribution $q$ should seek to minimise this coefficient, by trading off a preference for parameter values with lower computational burden against ensuring small variability in the weighted errors, $w\Delta$.
However, for simplicity, we will assume in this paper that $q$ is fixed.
Instead, we seek ways to use model approximations to directly reduce the leading-order coefficient in \Cref{eq:performance}, based on the identified trade-off between decreasing computational burden, $C$, and controlling the variance of the weighted error, $w\Delta$.

\section{Multifidelity inference}
\label{s:mf}

In \Cref{cor:performance}, the performance of \Cref{alg:is} is quantified explicitly in terms of how the Monte Carlo error between the estimate, $\hat G$, and the approximated posterior mean, $\bar G_\omega$, decays with increasing computational budget, $C_{\mathrm{tot}}$.
It initially appears reasonable to conclude that the linear dependence of the performance on the expected iteration time, $\mathbf E(C)$, implies that if we can speed up the simulation step of \Cref{alg:is}, then we can significantly reduce the MSE for a given computational budget.

Suppose that there exists an alternative model that we can use in \Cref{alg:is} in place of the original model, $f(\cdot \mid \theta)$, such that the expected computation time for each iteration, $\mathbf E(C)$, is significantly reduced.
There are two important issues that prevent this being a viable option for improving the efficiency of likelihood-free inference.
The first problem is that we need to be able to quantify the effect of the alternative model on the ratio $\mathbf E(w^2 \Delta^2) / \mathbf E(w)^2$ to ensure that the \emph{overall} performance of the algorithm is improved.
It is not sufficient to show that the computational burden of each iteration is reduced, if too many more iterations are subsequently required to achieve a specified MSE.

The second problem arises from the observation that the limiting value of $\hat G$, as output from \Cref{alg:is}, is $\bar G_\omega$, with residual bias,
\[
\lim_{C_{\mathrm{tot}} \rightarrow \infty} \mathbf E \left( \left( \hat G - \bar G \right)^2 \right) = (\bar G_\omega - \bar G)^2 \neq 0,
\]
recalling that $\bar G_\omega$ is the approximate posterior expectation induced by $L_\omega(\theta) = \mathbf E(\omega\mid\theta)$, and where the approximand, $\bar G$, is the posterior expectation induced by the likelihood, $\mathcal L(\theta) = f(y_0 \mid \theta)$.
We will identify this limiting residual squared bias, $(\bar G_\omega - \bar G)^2$, as the \emph{fidelity} of the model/likelihood-free weighting pair.
We emphasise here that the fidelity depends both on the model and the likelihood-free weighting used in \Cref{alg:is}, and is contextual to the target function, $G$.
For a given posterior mean, $\bar G$, a model and likelihood-free weighting pair for which the value of $(\bar G_\omega - \bar G)^2$ is small is termed high-fidelity, while larger values of $(\bar G_\omega - \bar G)^2$ are termed low-fidelity.
Thus, if we use an alternative model in place of $f$ in \Cref{alg:is}, the model (and likelihood-free weighting) may be too low-fidelity, in the sense of having too large a residual squared bias versus the posterior expectation of interest, $\bar G$.

The \emph{multifidelity} framework overcomes both these problems, by removing the need for a binary choice between the expensive model of interest and its cheaper alternative.
Instead, we carry out likelihood-free inference using information from both models.
We introduce the multifidelity likelihood-free importance sampling algorithm in \Cref{s:mf_is}.
In \Cref{s:hi}, we show that multifidelity likelihood-free importance sampling loses no fidelity versus high-fidelity importance sampling.
\Cref{s:performance} contains the main analytical results of this paper, in which we explore the conditions under which multifidelity inference can improve the performance of likelihood-free importance sampling with \emph{finite} computational budgets, as quantified by the leading-order characterisation of the MSE given in \Cref{eq:performance}.

\subsection{Multifidelity likelihood-free importance sampling}
\label{s:mf_is}

We denote the high-fidelity model and likelihood-free weighting as $f_\hi$ and $\omega_\hi$, respectively.
The likelihood under the high-fidelity model is denoted $\mathcal L_\hi(\theta) = f_\hi(y_0 \mid \theta)$, and is assumed to be intractable.
Following the notation introduced in \Cref{eq:lhf}, the high-fidelity pair $f_\hi$ and $\omega_\hi$ induce the approximate likelihood, $L_\hi(\theta) = \mathbf E(\omega_\hi \mid \theta)$ and the corresponding likelihood-free approximation to the posterior expectation, $\bar G_\hi$.
We further assume that simulating each $\mathbf y_\hi \sim f_\hi(\cdot\mid\theta)$ is computationally expensive.
This computational expense motivates the use of an approximate, low-fidelity model and likelihood-free weighting, denoted $f_\lo$ and $\omega_\lo$, respectively, inducing the approximate likelihood, $L_{\lo}(\theta) = \mathbf E(\omega_\lo \mid \theta)$, and corresponding likelihood-free approximation to the posterior expectation, $\bar G_{\lo}$.
We note that the low-fidelity model, $f_\lo$, induces its own likelihood, $L_\lo = f_\lo(y_0 \mid \theta)$, but assume that this remains intractable, requiring instead the simulation-based Bayesian approach.
However, we assume that simulations of the low-fidelity model, $\mathbf y_\lo \sim f_\lo(\cdot\mid\theta)$, are significantly cheaper to produce compared to simulations of the high-fidelity model.

Given the models $f_\lo$ and $f_\hi$, we will term the joint distribution $f_\mf(\mathbf y_\lo, \mathbf y_\hi \mid\theta)$ a \emph{multifidelity model} when $f_\mf$ has marginals equal to the low- and high-fidelity densities, $f_\lo(\mathbf y_\lo \mid \theta)$ and $f_\hi(\mathbf y_\hi \mid \theta)$.
The models \emph{may} be conditionally independent, such that $f_\mf(\mathbf y_\lo, \mathbf y_\hi \mid\theta) = f_\lo(\mathbf y_\lo \mid \theta) f_\hi(\mathbf y_\hi \mid \theta)$, in which case simulations at each model fidelity can be carried out independently given $\theta$.
Furthermore, if the simulations are conditionally independent, this means that the resulting likelihood-free weights, $\omega_\lo(\theta, \mathbf y_\lo)$ and $\omega_\hi(\theta, \mathbf y_\hi)$, are also conditionally independent.

However, in the more general definition of the multifidelity model as a joint distribution, we allow for \emph{coupling} between the two fidelities.
Conditioned on the low-fidelity simulations, $\mathbf y_\lo$, and on parameter values, $\theta$, we can produce a {coupled} simulation, $\mathbf y_\hi$, from the density $f_\hi(\mathbf y_\hi \mid \theta, \mathbf y_\lo)$ implied by
\[
f_\mf(\mathbf y_\lo, \mathbf y_\hi \mid \theta) = f_\hi(\mathbf y_\hi \mid \theta, \mathbf y_\lo) f_\lo(\mathbf y_\lo \mid \theta).
\]
Coupling imposes correlations between the resulting likelihood-free weights, $\omega_\lo(\theta, \mathbf y_\lo)$ and $\omega_\hi(\theta, \mathbf y_\hi)$, which thus allows evaluated values of $\omega_\lo$ to provide more information about unknown values of $\omega_\hi$, thereby acting as a variance reduction technique~\citep{Owen2013}.
Given the (coupled) multifidelity model, we can calculate a {multifidelity} likelihood-free weighting as follows.

\begin{defn}
\label{defn:mf}
Let $M$ be any non-negative integer-valued random variable, with conditional probability mass function $p(\cdot \mid \theta, \mathbf y_\lo)$, and with a positive conditional mean,
\[
\mu(\theta, \mathbf y_\lo) = \mathbf E(M \mid \theta, \mathbf y_\lo) > 0.
\]
Given a parameter value, $\theta$, we define
\begin{subequations}
\label{eq:mf}
\begin{align}
\mathbf z &= (\mathbf y_\lo, \mathbf y_{\hi, 1}, \mathbf y_{\hi, 2}, \dots, \mathbf y_{\hi, m}), \label{mf:z} \\
\mathbf y_\lo &\sim f_\lo(\cdot\mid\theta), \label{mf:f_lo} \\
m &\sim p(\cdot \mid \theta, \mathbf y_\lo), \label{mf:p} \\
\mathbf y_{\hi,i} &\sim f_\hi(\cdot\mid\theta,\mathbf y_\lo), \label{mf:f_hi}
\end{align}
noting that each $\mathbf y_{\hi, i}$ may be coupled to the low-fidelity simulation $\mathbf y_\lo \sim f_\lo(\cdot\mid\theta)$.
We combine \Cref{mf:z,mf:f_lo,mf:p,mf:f_hi} to write the density of $\mathbf z$ as  $\phi(\mathbf z \mid \theta)$.
We further define the \emph{multifidelity likelihood-free weighting function}, 
\begin{align}
\omega_\mf(\theta, \mathbf z) &= \omega_\lo(\theta, \mathbf y_\lo) \nonumber \\ 
	&\quad + \frac{1}{\mu(\theta, \mathbf y_\lo)} \sum_{i=1}^m \left[ \omega_\hi(\theta, \mathbf y_{\hi, i}) - \omega_\lo(\theta, \mathbf y_\lo) \right], \label{mf:ell}
\end{align}
as the low-fidelity likelihood-free weighting, corrected by a randomly drawn number, $M=m$, of conditionally independent high-fidelity likelihood-free weightings.
Taking expectations over $\mathbf z$, we write
\begin{equation}
L_\mf(\theta) = \mathbf E(\omega_\mf\mid\theta) = \int \omega_\mf(\theta, \mathbf z) \phi(\mathbf z\mid\theta) ~\mathrm d\mathbf z, \label{mf:L}
\end{equation}
\end{subequations}
as the multifidelity approximation to the likelihood.
\end{defn}

Given $M = m$, only $m$ replicates of $\mathbf y_{\hi,i} \sim f_\hi(\cdot \mid \theta, \mathbf y_\lo)$ need to be simulated for $\omega_\mf(\theta, \mathbf z)$ to be evaluated.
Thus, whenever $m=0$, this means that no high-fidelity simulations need to be completed for $\omega_\mf(\theta, \mathbf z)$ to be calculated, removing the high-fidelity simulation cost from that iteration.
\Cref{alg:mf} presents the adaptation of the basic importance sampling method of \Cref{alg:is} to incorporate the multifidelity weighting function.
The simulation step, $\mathbf y \sim f(\cdot \mid \theta)$, in \Cref{alg:is} is replaced by the \textsc{MF-Simulate} function in \Cref{alg:mf}.

\begin{algorithm}
\caption{Multifidelity likelihood-free importance sampling.}
\label{alg:mf}
\begin{algorithmic}
\Require Prior, $\pi$; importance distribution, $q$; likelihood-free weightings, $\omega_\hi$ and $\omega_\lo$; models $f_\hi(\cdot \mid \theta)$ and $f_\lo(\cdot \mid \theta)$; conditional probability mass function $p(\cdot \mid \theta, \mathbf y_\lo)$ on non-negative integers with mean function $\mu(\theta, \mathbf y_\lo)$; stop condition, \texttt{stop}; target estimated function, $G$.
\Statex
\State Set counter $i=0$.
\Repeat 
	\State Increment counter $i \gets i+1$;
	\State Sample $\theta_i \sim q(\cdot)$;
	\State Generate $\mathbf z_i \sim \phi(\cdot\mid\theta_i)$ from \Call{MF-Simulate}{$\theta_i$};
	\State For $\omega_\mf$ in \Cref{mf:ell}, calculate the weight
		\begin{equation}
		\label{eq:w_mf}
			w_i = w_\mf(\theta, \mathbf z_i) = \frac{\pi(\theta_i)}{q(\theta_i)} \omega_\mf(\theta_i, \mathbf z_i).
		\end{equation}
\Until{\texttt{stop} = \texttt{true}}
\State \Return Weighted sum,
\[
\hat G_\mf = \sum_{i=1}^N w_i G(\theta_i) \bigg/ \sum_{j=1}^N w_j.
\]
\Statex
\Function{MF-Simulate}{$\theta$}
    \State Simulate $\mathbf y_\lo \sim f_\lo(\cdot \mid \theta)$;
	\State Generate $m \sim p(\cdot \mid \theta, \mathbf y_\lo)$ with mean $\mu(\theta, \mathbf y_\lo)$;
	\If{$m=0$}
		\State \Return $\mathbf z = (\mathbf y_\lo)$;
	\Else
		\For{$i=1,\dots,m$}
			\State Simulate $\mathbf y_{\hi, i} \sim f_\hi(\cdot \mid \theta, \mathbf y_\lo)$;
		\EndFor
	    \State \Return $\mathbf z = (\mathbf y_\lo, \mathbf y_{\hi, 1}, \dots, \mathbf y_{\hi, m})$
	\EndIf
\EndFunction
\end{algorithmic}
\end{algorithm}

\subsection{Accuracy of multifidelity inference}
\label{s:hi}

We observe that using $f_\hi$ and $\omega_\hi$ in \Cref{alg:is} produces an estimate of the high-fidelity approximate posterior expectation, $\bar G_{\hi}$.
In \Cref{prop:mf_accuracy}, we show that the multifidelity approximate likelihood, $L_{\mf}(\theta) = \mathbf E(\omega_\mf(\theta, \mathbf z) \mid \theta)$, is equal to the high-fidelity approximate likelihood, $L_{\hi}(\theta) = \mathbf E(\omega_\hi(\theta, \mathbf y_\hi) \mid \theta)$.
As a result, \Cref{alg:mf} also produces a consistent estimate of the high-fidelity approximate posterior expectation, $\bar G_{\hi}$.

\begin{prop}
\label{prop:mf_accuracy}
The multifidelity approximation to the likelihood, $L_{\mf}(\theta) = \mathbf E(\omega_\mf\mid\theta)$, is equal to the high-fidelity approximation to the likelihood, $L_{\hi}(\theta) = \mathbf E(\omega_\hi\mid\theta)$.
Therefore, the estimate $\hat G_\mf$ produced by \Cref{alg:mf} is a consistent estimate of the high-fidelity approximate posterior expectation, $\bar G_{\hi}$.
\end{prop}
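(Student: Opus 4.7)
The plan is to establish the claim in two steps: first verify the identity $L_\mf(\theta) = L_\hi(\theta)$ by a direct expectation calculation using the tower property, and then invoke \Cref{thm:consistent} to obtain consistency of $\hat G_\mf$ as an estimator of $\bar G_\hi$.

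For the main identity, I would start by taking the expectation of $\omega_\mf(\theta, \mathbf z)$ under $\phi(\cdot \mid \theta)$, and condition first on $(\theta, \mathbf y_\lo)$ before marginalising over $\mathbf y_\lo \sim f_\lo(\cdot \mid \theta)$. The first summand, $\omega_\lo(\theta, \mathbf y_\lo)$, has conditional expectation equal to itself, so contributes $L_\lo(\theta)$ after marginalising. The second summand is the average
\[
\frac{1}{\mu(\theta, \mathbf y_\lo)} \sum_{i=1}^{M} \bigl[\omega_\hi(\theta, \mathbf y_{\hi,i}) - \omega_\lo(\theta, \mathbf y_\lo)\bigr],
\]
and here, conditional on $(\theta, \mathbf y_\lo)$, the summands are identically distributed and independent of $M$. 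A Wald-style identity gives conditional expectation
\[
\frac{\mathbf E(M \mid \theta, \mathbf y_\lo)}{\mu(\theta, \mathbf y_\lo)} \bigl[\mathbf E(\omega_\hi(\theta, \mathbf y_{\hi,1}) \mid \theta, \mathbf y_\lo) - \omega_\lo(\theta, \mathbf y_\lo)\bigr],
\]
and since $\mathbf E(M\mid \theta, \mathbf y_\lo) = \mu(\theta, \mathbf y_\lo)$ by definition, the prefactor collapses to $1$. Marginalising over $\mathbf y_\lo$, the requirement that $f_\mf$ has $f_\hi(\cdot \mid \theta)$ as its second marginal means $\mathbf E(\omega_\hi(\theta, \mathbf y_{\hi,1}) \mid \theta) = L_\hi(\theta)$, while the $\omega_\lo$ piece averages to $L_\lo(\theta)$. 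Summing the contributions, the $L_\lo$ terms cancel, leaving $L_\mf(\theta) = L_\hi(\theta)$.

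With the identity of likelihoods established, the consistency claim is immediate: \Cref{alg:mf} is precisely \Cref{alg:is} applied with the weighting $\omega_\mf$ and the augmented simulation density $\phi(\mathbf z\mid\theta)$. Thus \Cref{thm:consistent} yields that $\hat G_\mf$ converges to $\bar G_{\omega_\mf}$, which is the posterior expectation induced by $L_\mf(\theta) \pi(\theta)/Z_\mf$. Since $L_\mf = L_\hi$ pointwise in $\theta$, the induced normalised posteriors coincide, hence $\bar G_{\omega_\mf} = \bar G_\hi$.

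The only subtlety I expect to encounter is careful bookkeeping of the conditioning structure in the Wald-style step, particularly ensuring that $M$ is independent of the $\mathbf y_{\hi,i}$ given $(\theta, \mathbf y_\lo)$; this follows from the sampling scheme in \Cref{mf:p,mf:f_hi}, where $M$ is drawn from $p(\cdot \mid \theta, \mathbf y_\lo)$ before the high-fidelity simulations. Everything else is a routine application of the tower property and linearity of expectation, so no further technical obstacle arises.
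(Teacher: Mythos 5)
Your proposal is correct and follows essentially the same route as the paper's proof: condition on $(\theta,\mathbf y_\lo)$, use the tower property together with $\mathbf E(M\mid\theta,\mathbf y_\lo)=\mu(\theta,\mathbf y_\lo)$ to show the correction term has conditional mean $\mathbf E(\omega_\hi\mid\theta,\mathbf y_\lo)-\omega_\lo$, marginalise over $\mathbf y_\lo$, and then appeal to \Cref{thm:consistent}. The paper conditions on $M=m$ before averaging over $M$ rather than invoking a Wald-style identity directly, but this is only a cosmetic difference in the order of conditioning.
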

\begin{proof}
We take the expectation of $\omega_\mf$ conditional on $(\theta, \mathbf y_\lo, m)$, to find
\[
\mathbf E(\omega_\mf \mid \theta, \mathbf y_\lo, M=m) = \left(1 - \frac{m}{\mu} \right)\omega_\lo(\theta, \mathbf y_\lo) + \frac{m}{\mu} \mathbf E(\omega_\hi \mid \theta, \mathbf y_\lo).
\]
Further taking expectations over the random integer $M$, which has conditional expected value $\mu(\theta, \mathbf y_\lo)$, gives 
\[
\mathbf E(\omega_\mf \mid \theta, \mathbf y_\lo) = \mathbf E(\omega_\hi \mid \theta, \mathbf y_\lo).
\]
Further taking expectations with respect to $\mathbf y_\lo$, it follows that $L_{\omega_\mf}(\theta) = L_{\omega_\hi}(\theta)$.
Therefore, the likelihood-free approximate posteriors, $\pi_{\omega_\mf} = \pi_{\omega_\hi}$, are equal and thus $\hat G_\mf$ is a consistent estimate of $\bar G_{\omega_\mf} = \bar G_{\omega_\hi}$, as required.
\end{proof}

In the limit of infinite computational budgets, the estimate produced by multifidelity importance sampling, in \Cref{alg:mf}, is as accurate as the estimate produced by high-fidelity importance sampling, in \Cref{alg:is} using $f_\hi$ and $\omega_\hi$.
However, we still need to show that the performance of \Cref{alg:mf} exceeds that of \Cref{alg:is} in the practical context of limited computational budgets.
In \Cref{s:performance}, we introduce a method to quantify the performance of \Cref{alg:is,alg:mf} and show that the performance of multifidelity inference is strongly determined by the distribution of $M$, the random number of high-fidelity simulations required at each iteration.

\subsection{Comparing performance}
\label{s:performance}
\Cref{cor:performance} determines the leading-order behaviour of the MSE of the output of \Cref{alg:is} as the computational budget increases.
A similar result applies to the output of \Cref{alg:mf}.
We compare two settings: first, using \Cref{alg:is} with the high-fidelity model, $f_\hi$, and likelihood-free weighting, $\omega_\hi$.
Each iteration has computational cost denoted $C_\hi$, and produces a weighted Monte Carlo sample with weights $w_i$ as independent draws of the random variable $w_\hi$.
The output of \Cref{alg:is} is denoted $\hat G_\hi$, with Monte Carlo error $\hat \Delta_\hi = \hat G_\hi - \bar G_{\hi}$.
The MSE for \Cref{alg:is} has leading-order behaviour
\begin{equation}
\label{eq:perf_hi}
\mathbf E \left( \hat \Delta_\hi^2 \right) 
=
\left[ \frac{\mathbf E(C_\hi)\mathbf E \left(w_\hi^2 \Delta^2\right)}{\mathbf E(w_\hi)^2} \right] \frac{1}{C_{\mathrm{tot}}}
+ O\left( \frac{1}{C_{\mathrm{tot}}^2} \right),
\end{equation}
as the total simulation budget $C_{\mathrm{tot}} \rightarrow \infty$, where $\Delta(\theta) = G(\theta) - \bar G_{\hi}$.

Second, we use \Cref{alg:mf} with the multifidelity model $f_\mf$ and likelihood-free weighting, $\omega_\mf$.
Each iteration has computational cost denoted $C_\mf$, and produces a weighted Monte Carlo sample with weights $w_i$ as independent draws of the random variable $w_\mf$.
The output of \Cref{alg:mf} is $\hat G_\mf$, with Monte Carlo error $\hat \Delta_\mf = \hat G_\mf - \bar G_{\hi}$.
The MSE for \Cref{alg:mf} has leading-order behaviour
\begin{equation}
\label{eq:perf_mf}
\mathbf E \left( \hat \Delta_\mf^2 \right) 
=
\left[ \frac{\mathbf E(C_\mf)\mathbf E \left(w_\mf^2 \Delta^2\right)}{\mathbf E(w_\mf)^2} \right] \frac{1}{C_{\mathrm{tot}}}
+ O\left( \frac{1}{C_{\mathrm{tot}}^2} \right),
\end{equation}
as the total simulation budget $C_{\mathrm{tot}} \rightarrow \infty$, where again $\Delta(\theta) = G(\theta) - \bar G_{\hi}$.

The main result of the paper is given in \Cref{thm:performance}.
For concreteness, we assume that the random variable, $M$, determining the required number of high-fidelity simulations in each iteration of \Cref{alg:mf}, is Poisson distributed, conditional on the parameter value and low-fidelity simulation output.
We show that, for a given multifidelity model and likelihood-free weightings, the mean function, $\mu$, for $M$ determines the performance of \Cref{alg:mf} relative to \Cref{alg:is}.

\begin{thm}
\label{thm:performance}
Assume that the random number of high-fidelity simulations, $M$, required in each iteration of \Cref{alg:mf} is Poisson distributed with conditional mean $\mu(\theta, \mathbf y_\lo)$.
Let $c_\hi(\theta)$
[respectively, $c_\lo(\theta)$ and $c_\hi(\theta, \mathbf y_\lo)$]
be the expected time taken to simulate $\mathbf y_\hi \sim f_\hi(\cdot \mid \theta)$ 
[respectively, to simulate $\mathbf y_\lo \sim f_\lo(\cdot \mid \theta)$ and to produce the coupled high-fidelity simulation $\mathbf y_\hi \sim f_\hi(\cdot \mid \theta, \mathbf y_\lo)$].
Further, assume that the computational cost of each iteration of \Cref{alg:is} and \Cref{alg:mf} can be approximated by the dominant cost of simulation alone, neglecting the costs of the other calculations.

The performance of \Cref{alg:mf}, quantified in \Cref{eq:perf_mf}, exceeds the performance of \Cref{alg:is}, quantified in \Cref{eq:perf_hi}, if and only if $\mathcal J_\mf[\mu] < \mathcal J_\hi$, where
\begin{subequations}
\label{eq:J}
\begin{align}
\mathcal J_\hi &= \bar c_\hi V_\hi, \label{eq:J_hi} \\
\mathcal J_\mf[\mu] &= 
\left( \bar c_\lo + \iint \mu(\theta, \mathbf y_\lo) c_\hi(\theta, \mathbf y_\lo) ~\rho(\theta, \mathbf y_\lo) \mathrm d\theta \mathrm d\mathbf y_\lo \right) \nonumber \\
&\quad{} \times
\left( V_\mf + \iint \Delta_q(\theta)^2 \frac{\eta(\theta, \mathbf y_\lo)}{\mu(\theta, \mathbf y_\lo)} ~\rho(\theta, \mathbf y_\lo) \mathrm d\theta \mathrm d \mathbf y_\lo \right), \label{eq:J_mf}
\end{align}
for the constants
\begin{align}
\bar c_\hi &= \int c_\hi(\theta) ~q(\theta)\mathrm d\theta, \label{eq:c_hi} \\
V_\hi &= \int \Delta_q(\theta)^2 \mathbf E(\omega_\hi^2 \mid \theta) ~q(\theta) \mathrm d\theta, \label{eq:V_hi} \\
\bar c_\lo &= \int c_\lo(\theta) ~q(\theta)\mathrm d\theta, \label{eq:c_lo} \\
V_\mf &= \int \Delta_q(\theta)^2 \mathbf E(\lambda_\hi^2 \mid \theta) ~q(\theta)\mathrm d\theta, \label{eq:V_mf}
\end{align} 
for the functions
\begin{align}
\Delta_q(\theta) &= \frac{\pi(\theta)}{q(\theta)} \left( G(\theta) - \bar G_{\hi} \right), \label{eq:Delta_q} \\
\eta(\theta, \mathbf y_\lo) &= \mathbf E\left(\left(\omega_\hi - \omega_\lo\right)^2 \mid \theta, \mathbf y_\lo \right), \label{eq:eta} \\
\lambda_\hi(\theta, \mathbf y_\lo) &= \mathbf E(\omega_\hi \mid \theta, \mathbf y_\lo), \label{eq:lambda_hi}
\end{align}
and the joint density
\begin{equation}
\rho(\theta, \mathbf y_\lo) = f_\lo(\mathbf y_\lo \mid \theta) q(\theta). \label{eq:rho}
\end{equation}
\end{subequations}
\end{thm}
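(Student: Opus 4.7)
The plan is to unpack each factor appearing in the leading-order MSE coefficients \eqref{eq:perf_hi} and \eqref{eq:perf_mf} under the Poisson assumption on $M$, and show that after a common $Z_\hi^{-2}$ factor cancels, the comparison reduces exactly to $\mathcal J_\mf[\mu]$ versus $\mathcal J_\hi$. I would first establish that the denominators agree: unpacking $\mathbf E(w_\hi)$ gives $\int \pi(\theta) L_\hi(\theta)\,\mathrm d\theta = Z_\hi$, and \Cref{prop:mf_accuracy} ($L_\mf = L_\hi$) immediately gives $\mathbf E(w_\mf) = Z_\hi$ as well. Hence the sign of $\mathbf E(\hat\Delta_\mf^2) - \mathbf E(\hat\Delta_\hi^2)$ is controlled by the numerators $\mathbf E(C_\hi)\,\mathbf E(w_\hi^2\Delta^2)$ and $\mathbf E(C_\mf)\,\mathbf E(w_\mf^2\Delta^2)$.

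For the high-fidelity side the identification is a routine change of variables: the cost assumption gives $\mathbf E(C_\hi) = \bar c_\hi$, and writing $w_\hi = \pi(\theta)\omega_\hi/q(\theta)$ and conditioning on $\theta$ yields $\mathbf E(w_\hi^2 \Delta^2) = \int \Delta_q(\theta)^2\,\mathbf E(\omega_\hi^2\mid\theta)\,q(\theta)\,\mathrm d\theta = V_\hi$, giving $\mathcal J_\hi$ directly. For the multifidelity side, the cost is computed by conditioning on $(\theta, \mathbf y_\lo)$: one low-fidelity simulation costs $c_\lo(\theta)$, and the expected number $\mu(\theta, \mathbf y_\lo)$ of coupled high-fidelity simulations each costs $c_\hi(\theta, \mathbf y_\lo)$, so integration against $\rho$ yields exactly the first bracketed factor in \eqref{eq:J_mf}.

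The substantive work is computing $\mathbf E(\omega_\mf^2 \mid \theta, \mathbf y_\lo)$. Conditional on $(\theta, \mathbf y_\lo)$, write $\omega_\mf = \omega_\lo + \mu^{-1} S_M$ where $S_M = \sum_{i=1}^M b_i$ is a compound Poisson sum with i.i.d.\ increments $b_i = \omega_\hi(\theta, \mathbf y_{\hi,i}) - \omega_\lo(\theta, \mathbf y_\lo)$, having first moment $\beta = \lambda_\hi - \omega_\lo$ and second moment $\eta$. Apply the compound-variance identity $\mathrm{Var}(S_M) = \mathbf E(M)\mathrm{Var}(b) + \mathrm{Var}(M)\mathbf E(b)^2$, and use the Poisson property $\mathrm{Var}(M) = \mathbf E(M) = \mu$ to collapse this to $\mathrm{Var}(S_M) = \mu\,\mathbf E(b^2) = \mu\eta$; combined with $\mathbf E(S_M) = \mu\beta$ this gives $\mathbf E(S_M^2) = \mu\eta + \mu^2\beta^2$. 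Expanding $\omega_\mf^2 = \omega_\lo^2 + 2\mu^{-1}\omega_\lo S_M + \mu^{-2}S_M^2$ and simplifying, the cross terms rearrange into a perfect square, yielding the clean identity
\[
\mathbf E(\omega_\mf^2 \mid \theta, \mathbf y_\lo) = \lambda_\hi(\theta, \mathbf y_\lo)^2 + \frac{\eta(\theta, \mathbf y_\lo)}{\mu(\theta, \mathbf y_\lo)}.
\]
Integrating $\Delta_q(\theta)^2$ times this expression against $\rho$, the first term produces $V_\mf$ and the second produces the $\eta/\mu$ penalty integral, reconstructing the second bracketed factor of $\mathcal J_\mf[\mu]$.

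The main obstacle is the compound-Poisson second moment calculation; in particular, it is specifically the Poisson hypothesis that forces $\mathrm{Var}(M) = \mathbf E(M)$, so that the two components of the compound variance merge into the single quantity $\mu\mathbf E(b^2)$. Any other distribution for $M$ would leave an additional $\mathrm{Var}(M)\beta^2$ term in the cost-variance product, and the resulting functional in $\mu$ would not admit the clean $\eta/\mu$ form that underpins the subsequent optimisation. Putting the four identifications together shows that the ratio of MSE leading-order coefficients equals $\mathcal J_\mf[\mu]/\mathcal J_\hi$, and the theorem follows.
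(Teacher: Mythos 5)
Your proposal is correct and follows essentially the same route as the paper's proof: cancel the common denominator via \Cref{prop:mf_accuracy}, identify $\bar c_\hi V_\hi$ and the multifidelity cost factor by conditioning, and reduce the second factor to the conditional identity $\mathbf E(\omega_\mf^2 \mid \theta, \mathbf y_\lo) = \lambda_\hi^2 + \eta/\mu$. The only (cosmetic) difference is that you invoke the random-sum variance identity and specialise to Poisson immediately, whereas the paper computes the moments of $D_m = \sum_i(\omega_{\hi,i}-\omega_\lo)$ for fixed $m$ and keeps $\mathrm{Var}(M\mid\theta,\mathbf y_\lo)$ symbolic until the last step, which is what enables its later corollaries for binomial and geometric $M$.
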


The performance metrics of \Cref{alg:is}, $\mathcal J_\hi$, and of \Cref{alg:mf}, $\mathcal J_\mf[\mu]$, are each the product of the expected simulation time and the variability of the corresponding likelihood-free weighting.
In the case of \Cref{alg:mf}, the performance depends explicitly on the free choice of the function $\mu(\theta, \mathbf y_\lo)$ that determines the conditional mean of the Poisson-distributed number of high-fidelity simulations required at each iteration.
We observe from the first factor in \Cref{eq:J_mf} that, when $\mu$ is smaller, the total simulation cost is less.
However, the second factor of \Cref{eq:J_mf} implies that as $\mu$ decreases, the variability of the likelihood-free weighting can increase without bound, which can severely damage the performance.
Thus, \Cref{eq:J_mf} illustrates the characteristic multifidelity trade-off between reducing simulation burden while also controlling the increase in sample variance.
Using classical results from calculus of variations, it is possible to determine the mean function, $\mu^\star$, that achieves optimal performance of \Cref{alg:mf}, in the sense of minimising the functional, $\mathcal J_\mf[\mu]$.

\begin{lem}
\label{lem:mustar}
The functional $\mathcal J_\mf[\mu]$ quantifying the performance of \Cref{alg:mf} is optimised by the function $\mu^\star$, where
\begin{equation}
\label{eq:mustar}
\mu^\star(\theta, \mathbf y_\lo)^2 = \Delta_q(\theta)^2 \left[ \frac{\eta(\theta, \mathbf y_\lo) / V_\mf}{c_\hi(\theta, \mathbf y_\lo) / \bar c_\lo} \right].
\end{equation}
\end{lem}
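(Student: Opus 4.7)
The plan is to treat $\mathcal{J}_\mf[\mu] = A[\mu] B[\mu]$ as a product functional, where $A[\mu] = \bar c_\lo + \iint \mu c_\hi \rho\,\mathrm d\theta\,\mathrm d\mathbf y_\lo$ and $B[\mu] = V_\mf + \iint \Delta_q^2 \eta \mu^{-1} \rho\,\mathrm d\theta\,\mathrm d\mathbf y_\lo$, and identify its stationary points by a first-variation argument. Perturbing $\mu \mapsto \mu + \epsilon\,\delta\mu$ and extracting the coefficient of $\epsilon$, the first variation of $\mathcal{J}_\mf$ in the direction $\delta\mu$ is
\[
B[\mu] \iint c_\hi \rho\, \delta\mu\,\mathrm d\theta\,\mathrm d\mathbf y_\lo \;-\; A[\mu] \iint \Delta_q^2 \eta \mu^{-2} \rho\, \delta\mu\,\mathrm d\theta\,\mathrm d\mathbf y_\lo.
\]
Requiring this to vanish for every admissible $\delta\mu$ yields the pointwise Euler--Lagrange equation $B[\mu]\,c_\hi = A[\mu]\,\Delta_q^2 \eta\,\mu^{-2}$ on the support of $\rho$, which rearranges to
\[
\mu(\theta,\mathbf y_\lo)^2 = \frac{A[\mu]}{B[\mu]}\cdot \frac{\Delta_q(\theta)^2\, \eta(\theta,\mathbf y_\lo)}{c_\hi(\theta,\mathbf y_\lo)}.
\]

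The key observation is that $A[\mu]/B[\mu]$ is a scalar independent of $(\theta, \mathbf y_\lo)$, which forces any stationary $\mu$ to take the separable form $\mu(\theta, \mathbf y_\lo) = k\,|\Delta_q(\theta)|\sqrt{\eta(\theta,\mathbf y_\lo)/c_\hi(\theta,\mathbf y_\lo)}$ for some constant $k>0$. I would then close the variational step by substituting this ansatz back into $A$ and $B$: writing $I = \iint |\Delta_q| \sqrt{\eta c_\hi}\, \rho\,\mathrm d\theta\,\mathrm d\mathbf y_\lo$, one obtains $A[\mu] = \bar c_\lo + kI$ and $B[\mu] = V_\mf + I/k$, so the self-consistency requirement $k^2 = A[\mu]/B[\mu]$ simplifies, after clearing denominators and noting that the $kI$ contributions cancel, to $k^2 V_\mf = \bar c_\lo$. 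This yields the unique positive value $k^2 = \bar c_\lo / V_\mf$, and substituting back reproduces exactly the formula claimed in \Cref{eq:mustar}.

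To confirm that this critical point is a global minimum rather than a saddle, I would supplement the Euler--Lagrange argument with a direct inequality bound. Expanding
\[
\mathcal{J}_\mf[\mu] = \bar c_\lo V_\mf + \bar c_\lo \iint \Delta_q^2 \eta \mu^{-1} \rho + V_\mf \iint \mu c_\hi \rho + \left(\iint \mu c_\hi \rho\right)\left(\iint \Delta_q^2 \eta \mu^{-1} \rho\right),
\]
the Cauchy--Schwarz inequality bounds the final cross term below by $I^2$, while AM--GM bounds the two middle terms below by $2\sqrt{\bar c_\lo V_\mf}\,I$, jointly giving $\mathcal{J}_\mf[\mu] \ge \bigl(\sqrt{\bar c_\lo V_\mf} + I\bigr)^{2}$. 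The main technical point---and the place where a little care is needed---is verifying that the equality conditions of both inequalities are simultaneously saturated at the same $\mu$: Cauchy--Schwarz equality forces $\mu^2 \propto \Delta_q^2 \eta / c_\hi$, pinning down the functional shape, while AM--GM equality forces $\bar c_\lo \iint \Delta_q^2 \eta \mu^{-1} \rho = V_\mf \iint \mu c_\hi \rho$, pinning down the scale through $k^2 = \bar c_\lo/V_\mf$. These two conditions are consistent and are met precisely by $\mu^\star$, confirming it as the global minimiser and closing the proof.
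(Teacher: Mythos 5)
Your variational argument is correct and is essentially the paper's own proof: the paper likewise writes $\mathcal J_\mf[\mu]=\mathcal C[\mu]\mathcal V[\mu]$, sets the functional derivative $\mathcal V[\mu]c_\hi\rho-\mathcal C[\mu]\Delta_q^2\eta\rho/\mu^2$ to zero, and resolves the self-consistency condition to obtain $\mathcal C[\mu^\star]/\mathcal V[\mu^\star]=\bar c_\lo/V_\mf$; your computation $k^2V_\mf=\bar c_\lo$ after the $kI$ terms cancel is the same algebra. Where you go beyond the paper is the second half: the paper stops at stationarity, whereas your Cauchy--Schwarz plus AM--GM expansion establishes the lower bound $\mathcal J_\mf[\mu]\ge\bigl(\sqrt{\bar c_\lo V_\mf}+I\bigr)^2$ with equality exactly at $\mu^\star$, which certifies that the critical point is the \emph{global} minimiser rather than merely a stationary point. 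This is a genuine strengthening --- the paper implicitly relies on this fact when it later evaluates $\mathcal J_\mf[\mu^\star]$ in the proof of \Cref{cor:muexist} and obtains precisely your bound $\bigl(\sqrt{\bar c_\lo V_\mf}+\iint\sqrt{\Delta_q^2\eta c_\hi}\,\rho\,\mathrm d\theta\,\mathrm d\mathbf y_\lo\bigr)^2$ as the optimal value, but it never verifies global optimality. Your check that the equality conditions of the two inequalities are simultaneously attainable (shape from Cauchy--Schwarz, scale from AM--GM) is the right point to be careful about, and it closes the gap cleanly.
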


The function $\mu^\star$ given by \Cref{lem:mustar} defines the optimal number of high-fidelity simulations required in any iteration of \Cref{alg:mf}, on average, given the parameter value and low-fidelity simulation output.
We note that larger values of $\eta(\theta, \mathbf y_\lo) = \mathbf E \left( \left( \omega_\hi - \omega_\lo \right)^2 \mid \theta, \mathbf y_\lo \right)$, quantifying the expected squared error between $\omega_\hi$ and $\omega_\lo$, lead to larger values for $\mu^\star$.
Intuitively, if the expected error between the likelihood-free weightings at each fidelity is larger, then the requirement to simulate the high-fidelity model should be greater, to reduce the sample variance.
Conversely, where $c_\hi(\theta, \mathbf y_\lo)$ is larger, the greater simulation time of the high-fidelity model means that $\mu^\star$ should be smaller, and the requirement for the most expensive simulations is reduced.
Intuitively, $\mu^\star$ acts to balance the trade-off between controlling simulation cost and variance identified in \Cref{eq:J_mf} above.

It follows from \Cref{thm:performance} and \Cref{lem:mustar} that \Cref{alg:mf} can only ever be an improvement over \Cref{alg:is} if the optimal performance, $\mathcal J_\mf^\star = \mathcal J_\mf[\mu^\star]$, satisfies $\mathcal J_\mf^\star < \mathcal J_\hi$.

\begin{cor}
\label{cor:muexist}
There exists a mean function, $\mu$, such that the performance of \Cref{alg:mf} exceeds the performance of \Cref{alg:is}, if and only if
\begin{equation}
\sqrt{\frac{\bar c_\lo}{\bar c_\hi} \frac{V_\mf}{V_\hi}} + \iint \sqrt{\frac{\Delta_q(\theta)^2 \eta(\theta, \mathbf y_\lo)}{V_\hi}} \sqrt{\frac{c_\hi(\theta, \mathbf y_\lo)}{\bar c_\hi}} ~\rho(\theta, \mathbf y_\lo) \mathrm d\theta \mathrm d\mathbf y_\lo
< 1.
\label{eq:muexist}
\end{equation}
\end{cor}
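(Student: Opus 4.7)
The plan is to use \Cref{lem:mustar} directly: since $\mu^\star$ minimises $\mathcal J_\mf[\mu]$, the existence of some $\mu$ with $\mathcal J_\mf[\mu] < \mathcal J_\hi$ is equivalent to $\mathcal J_\mf^\star = \mathcal J_\mf[\mu^\star] < \mathcal J_\hi$. So the entire task reduces to computing $\mathcal J_\mf^\star$ in closed form and rearranging the resulting inequality. The ``if'' direction is immediate by choosing $\mu = \mu^\star$; the ``only if'' direction uses the optimality chain $\mathcal J_\mf^\star \leq \mathcal J_\mf[\mu] < \mathcal J_\hi$.

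First I would substitute the expression $\mu^\star(\theta, \mathbf y_\lo) = |\Delta_q(\theta)| \sqrt{\eta(\theta,\mathbf y_\lo) \bar c_\lo / (V_\mf\, c_\hi(\theta,\mathbf y_\lo))}$ from \eqref{eq:mustar} into the two factors of $\mathcal J_\mf[\mu]$ in \eqref{eq:J_mf}. A direct computation gives
\begin{align*}
\mu^\star(\theta,\mathbf y_\lo)\, c_\hi(\theta, \mathbf y_\lo) &= |\Delta_q(\theta)|\sqrt{\eta(\theta,\mathbf y_\lo)\, c_\hi(\theta,\mathbf y_\lo)\, \bar c_\lo / V_\mf}, \\
\frac{\Delta_q(\theta)^2\, \eta(\theta,\mathbf y_\lo)}{\mu^\star(\theta, \mathbf y_\lo)} &= |\Delta_q(\theta)|\sqrt{\eta(\theta,\mathbf y_\lo)\, c_\hi(\theta,\mathbf y_\lo)\, V_\mf / \bar c_\lo}.
\end{align*}
Writing $I = \iint |\Delta_q(\theta)|\sqrt{\eta(\theta,\mathbf y_\lo)\, c_\hi(\theta,\mathbf y_\lo)}\,\rho(\theta,\mathbf y_\lo)\,\mathrm d\theta\, \mathrm d\mathbf y_\lo$, the first factor becomes $\bar c_\lo + I\sqrt{\bar c_\lo/V_\mf}$ and the second becomes $V_\mf + I\sqrt{V_\mf/\bar c_\lo}$.

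The key algebraic observation is that the product telescopes into a perfect square,
\[
\mathcal J_\mf^\star = \bar c_\lo V_\mf + 2 I \sqrt{\bar c_\lo V_\mf} + I^2 = \left(\sqrt{\bar c_\lo V_\mf} + I\right)^2.
\]
Since all quantities are non-negative and $\mathcal J_\hi = \bar c_\hi V_\hi > 0$, the inequality $\mathcal J_\mf^\star < \mathcal J_\hi$ is equivalent to $\sqrt{\bar c_\lo V_\mf} + I < \sqrt{\bar c_\hi V_\hi}$. Dividing through by $\sqrt{\bar c_\hi V_\hi}$ and distributing the normalisation $1/\sqrt{\bar c_\hi V_\hi}$ inside the integral defining $I$ recovers exactly \eqref{eq:muexist}.

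I do not expect a serious obstacle: the computation is essentially the AM--GM (or Cauchy--Schwarz) completion of a square, and the only technicality is that $\mu^\star$ may vanish on the measure-zero set where $\Delta_q(\theta) = 0$ or $\eta(\theta,\mathbf y_\lo) = 0$. This is handled either by restricting $\mu^\star$ away from zero by an additive $\varepsilon > 0$ and letting $\varepsilon \downarrow 0$ (noting that the optimum is continuous in $\mu$ on the relevant domain), or by observing that on any such set the integrand $\mu c_\hi \rho$ and $\Delta_q^2 \eta/\mu \cdot \rho$ both contribute zero to $\mathcal J_\mf$ in the limit, so the value of $\mu$ there is immaterial.
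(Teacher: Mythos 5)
Your proposal is correct and follows essentially the same route as the paper: substitute $\mu^\star$ from \Cref{eq:mustar} into the two factors of $\mathcal J_\mf$, observe that the product completes to the perfect square $\bigl(\sqrt{\bar c_\lo V_\mf} + I\bigr)^2$, and rearrange the inequality $\mathcal J_\mf^\star < \bar c_\hi V_\hi$ to obtain \Cref{eq:muexist}. The only difference is that you spell out the algebra and the optimality chain $\mathcal J_\mf^\star \leq \mathcal J_\mf[\mu] < \mathcal J_\hi$, which the paper leaves implicit.
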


The first term in \Cref{eq:muexist} justifies the key assumption that the average computational cost of the low-fidelity model is as small as possible compared to that of the high-fidelity model, $\bar c_\lo < \bar c_\hi$.
The second term is a measure of the total detriment to the performance of \Cref{alg:mf} incurred by the inaccuracy of $\omega_\lo$ versus $\omega_\hi$ as a Monte Carlo estimate of $L_\hi$, as quantified by the function $\eta(\theta, \mathbf y_\lo) = \mathbf E((\omega_\hi - \omega_\lo)^2 \mid \theta, \mathbf y_\lo)$.
This condition justifies two key criteria for the success of the multifidelity method: that low-fidelity simulations are significantly cheaper than high-fidelity simulations, and that the likelihood-free weightings, $\omega_\hi$ and $\omega_\lo$, agree sufficiently well, on average.
A more detailed analysis of \Cref{eq:muexist} is given in \Cref{s:muexist}.

The proofs of \Cref{thm:performance}, \Cref{lem:mustar} and \Cref{cor:muexist} are given in \Cref{s:unconstrained_optimisation}.
However, we note that these analytical results are useful only insofar as the various functions and constants given in \Cref{eq:J} are known. 
In particular, evaluating the optimal mean function, $\mu^\star(\theta, \mathbf y_\lo)$, in \Cref{alg:mf} requires knowledge of the functions $\Delta_q(\theta)$, $\eta(\theta, \mathbf y_\lo)$, and $c_\hi(\theta, \mathbf y_\lo)$, and of the constants $V_\mf$ and $\bar c_\lo$.
Similarly, certifying whether the multifidelity approach can outperform the high-fidelity importance sampling method relies on knowing and integrating these functions.
However, it is typically the case that these functions and constants are unknown a priori, and need to be estimated based on simulations.
In the following section, we describe how the analytical results of \Cref{s:performance} can be used to construct a heuristic adaptive multifidelity algorithm that learns a near-optimal mean function, $\mu(\theta, \mathbf y_\lo)$, as simulations at each fidelity are completed.

\section{Multifidelity implementation}
\label{s:mfimplementation}
In \Cref{s:performance}, we derived the optimal mean function for the Poisson distribution of the number, $M$, of high-fidelity simulations required in an iteration of \Cref{alg:mf}, conditioned on the parameter value, $\theta$, and low-fidelity simulation, $\mathbf y_\lo$.
The optimality condition was based on minimising the functional $\mathcal J_\mf[\mu]$ defined in \Cref{eq:J_mf}, with minimiser $\mu^\star$ given in \Cref{eq:mustar}.
While we can derive the analytical form of $\mu^\star$, this cannot generally be determined a priori, but must be learned in parallel with carrying out likelihood-free inference.

In this section, we describe a practical approach to determining a near-optimal mean function for use in multifidelity likelihood-free inference.
We rely on two approximations, relative to the analytically optimal mean function $\mu^\star$ given in \Cref{eq:mustar}.
First, we constrain the optimisation of $\mathcal J_\mf$ to the space of functions, $\mu_{\mathcal D}$, that are piecewise constant in an arbitrary, given, finite partition, $\mathcal D$, of the global space of $(\theta,\mathbf y_\lo)$ values.
The resulting optimisation problem is therefore finite-dimensional.
However, although this optimisation can be solved analytically, we can observe that its estimation, being based on the ratios of simulation-based Monte Carlo estimates, is numerically unstable.
This motivates a second approximation, which is to follow a gradient-descent approach to allow the mean function to adaptively converge towards the optimum.

\subsection{Piecewise constant assumption}
We constrain the space of mean functions, $\mu$, to be piecewise constant.
Consider an arbitrary, given collection $\mathcal D = \{ D_k \mid k=1,\dots,K \}$ of $\rho$-integrable sets that partition the global space of $(\theta, \mathbf y_\lo)$ values.
We denote a $\mathcal D$-piecewise constant function, parametrised by the vector $\nu = (\nu_1, \dots, \nu_k)$, as
\[
\mu_{\mathcal D}(\theta, \mathbf y_\lo;~\nu) = \sum_{k=1}^K \nu_k \mathbf I((\theta, \mathbf y_\lo) \in D_k).
\]
Substituting this function into \Cref{eq:J_mf}, we can quantify the performance of \Cref{alg:mf}, using the mean defined by $\mu_{\mathcal D}(\theta, \mathbf y_\lo;~\nu)$, as the parametrised product,
\begin{subequations}
\label{eq:J_mf_D}
\begin{equation}
\mathcal J_{\mathcal D}(\nu) = \left( \bar c_\lo + \sum_{k=1}^K c_k \nu_k \right) \left( V_\mf + \sum_{k=1}^K  \frac{V_k}{\nu_k} \right),
\end{equation}
with coefficients given by the integrals
\begin{align}
c_k &= \iint_{D_k} c_\hi(\theta, \mathbf y_\lo) \rho(\theta, \mathbf y_\lo) ~\mathrm d \theta \mathrm d \mathbf y_\lo, \\
V_k &= \iint_{D_k} \Delta_q(\theta)^2 \eta(\theta, \mathbf y_\lo) \rho(\theta, \mathbf y_\lo) ~\mathrm d \theta \mathrm d \mathbf y_\lo.
\end{align}
\end{subequations}
Similarly to the functional $\mathcal J_\mf[\mu]$ optimised in \Cref{lem:mustar} across positive functions, $\mu$, we can optimise the function $\mathcal J_{\mathcal D}(\nu)$ across positive vectors, $\nu$.
\begin{lem}
\label{lem:nustar}
The optimal function values $\nu_k$ for $\mu_{\mathcal D}(\theta, \mathbf y_\lo;~\nu)$ that minimise $\mathcal J_{\mathcal D}(\nu)$ are
\begin{equation}
\label{eq:nustar}
\nu_k^\star = \sqrt{\frac{V_k / V_\mf}{c_k / \bar c_\lo}},
\end{equation}
for $c_k$ and $V_k$ as defined in \Cref{eq:J_mf_D} and for $V_\mf$ and $\bar c_\lo$ as given in \Cref{eq:V_mf,eq:c_lo}.
The resulting performance of \Cref{alg:mf} with this mean function is
\[
\mathcal J_{\mathcal D}^\star = \mathcal J_{\mathcal D}(\nu^\star) = \left( \sqrt{\bar c_\lo V_\mf} + \sum_{k=1}^K \sqrt{c_k V_k}  \right)^2.
\]
\end{lem}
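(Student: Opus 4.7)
The plan is to prove \Cref{lem:nustar} via a single application of the Cauchy--Schwarz inequality, which simultaneously identifies the minimiser and verifies that it is indeed a minimum. This is cleaner than taking partial derivatives (which produces a coupled system that, although solvable, requires the algebraic trick of observing that the optimal ratio between the two factors of $\mathcal J_\mathcal D$ is fixed at $\bar c_\lo / V_\mf$).

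First, I would write $\mathcal J_{\mathcal D}(\nu) = A(\nu) \cdot B(\nu)$ where $A(\nu) = \bar c_\lo + \sum_k c_k \nu_k$ and $B(\nu) = V_\mf + \sum_k V_k/\nu_k$, and regard each sum as an inner product over the index set $\{0, 1, \dots, K\}$. Specifically, define vectors with entries $a_0 = \sqrt{\bar c_\lo}$, $a_k = \sqrt{c_k \nu_k}$ for $k \geq 1$, and $b_0 = \sqrt{V_\mf}$, $b_k = \sqrt{V_k/\nu_k}$ for $k \geq 1$, so that $\|a\|^2 = A(\nu)$ and $\|b\|^2 = B(\nu)$, while the cross-terms satisfy $a_k b_k = \sqrt{\bar c_\lo V_\mf}$ at $k=0$ and $a_k b_k = \sqrt{c_k V_k}$ for $k \geq 1$ (the $\nu_k$ factors cancel).

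Applying the Cauchy--Schwarz inequality $\bigl( \sum_k a_k b_k \bigr)^2 \leq \|a\|^2 \|b\|^2$ immediately yields the lower bound
\begin{equation*}
\mathcal J_{\mathcal D}(\nu) \geq \left( \sqrt{\bar c_\lo V_\mf} + \sum_{k=1}^K \sqrt{c_k V_k} \right)^2,
\end{equation*}
which is the claimed value of $\mathcal J_{\mathcal D}^\star$. It remains to verify that the bound is attained by the stated minimiser. Equality in Cauchy--Schwarz holds if and only if $a$ and $b$ are proportional, i.e.\ there exists $\lambda > 0$ with $a_k = \lambda b_k$ for every $k$. The condition at $k = 0$ forces $\lambda = \sqrt{\bar c_\lo / V_\mf}$, and the condition at $k \geq 1$ becomes $c_k \nu_k = \lambda^2 V_k / \nu_k$, which rearranges to
\begin{equation*}
\nu_k^2 = \lambda^2 \frac{V_k}{c_k} = \frac{\bar c_\lo}{V_\mf}\cdot\frac{V_k}{c_k} = \frac{V_k/V_\mf}{c_k/\bar c_\lo},
\end{equation*}
matching \Cref{eq:nustar} exactly.

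The main conceptual step, and the only non-routine part, is choosing the correct Cauchy--Schwarz pairing so that the $\nu_k$ cancel in the cross-terms; after that, attainability and the closed-form expression for $\mathcal J_{\mathcal D}^\star$ follow immediately. I would also briefly remark that $\nu_k^\star > 0$ holds automatically whenever the coefficients $c_k, V_k$ are strictly positive (which is required for the original partition element $D_k$ to contribute meaningfully), so no boundary minima need be considered.
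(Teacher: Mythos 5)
Your Cauchy--Schwarz argument is correct, and it is a genuinely different route from the one the paper takes. The paper does not in fact print a proof of \Cref{lem:nustar}; the intended argument is the finite-dimensional analogue of the proof of \Cref{lem:mustar} in \Cref{s:unconstrained_optimisation}, namely: set the partial derivatives of $\mathcal J_{\mathcal D}$ (equivalently $\partial \mathcal J_{\mathcal D}/\partial \log \nu_k$, as displayed in \Cref{s:adaptive}) to zero, obtain the coupled stationarity system $\nu_k^2 = (A(\nu^\star)/B(\nu^\star))\,(V_k/c_k)$, and then resolve the coupling by the algebraic observation that the optimal ratio of the two factors satisfies $A(\nu^\star)/B(\nu^\star) = \bar c_\lo / V_\mf$. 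Your pairing $a_0=\sqrt{\bar c_\lo}$, $a_k=\sqrt{c_k\nu_k}$, $b_0=\sqrt{V_\mf}$, $b_k=\sqrt{V_k/\nu_k}$ makes the $\nu_k$ cancel in the cross terms, so Cauchy--Schwarz delivers the value of $\mathcal J_{\mathcal D}^\star$ as a lower bound valid on the whole positive orthant, and the equality condition recovers \Cref{eq:nustar} exactly as you compute. What your approach buys is globality for free: the derivative route only identifies a stationary point, and one must separately argue (e.g.\ via the divergence of $\mathcal J_{\mathcal D}$ as any $\nu_k \to 0^+$ or $\nu_k \to \infty$) that the unique positive stationary point is the global minimiser, whereas your inequality certifies this immediately. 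What the paper's route buys is that the gradient expressions it produces are exactly the ones reused in the adaptive update rule \Cref{eq:nu}, so the stationarity calculation is not wasted work. Your closing remark on positivity of $\nu_k^\star$ under $c_k, V_k > 0$ is the right caveat; no gap remains.
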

Similarly to the analytical results of \Cref{s:performance}, to evaluate $\nu^\star$ we need to estimate the values of $\bar c_\lo$, $c_k$, $V_\mf$ and $V_k$, which are unknown a priori.
Although these values can be estimated based on Monte Carlo simulation, the rational form of $\nu_k^\star$ means that these estimates can be unstable, particularly for sets $D_k \in \mathcal D$ with small volume (measured by the density, $\rho$).
We now consider a conservative approach to determining values for $\nu$ that will provide stable estimates of $\nu^\star$.

\subsection{Adaptive multifidelity likelihood-free inference}
\label{s:adaptive}
Rather than directly targeting $\nu^\star$, based on ratios of highly variable Monte Carlo estimates, we can introduce a gradient-descent approach to updating the vector $\nu$.
Taking derivatives of $\mathcal J_{\mathcal D}$ with respect to $\log \nu_k$ for $k=1,\dots,K$ gives the gradient,
\begin{equation*}
\frac{\partial \mathcal J_{\mathcal D}}{\partial \log \nu_k} = \nu_k c_k \left( V_\mf + \sum_{j=1}^K  \frac{V_j}{\nu_j} \right) - \frac{V_k}{\nu_k} \left( \bar c_\lo + \sum_{j=1}^K c_j \nu_j \right).
\end{equation*}
Thus, if we write $\nu^{(r)}$ for the value of $\nu$ used in iteration $r$ of \Cref{alg:mf}, we intend to update to $\nu^{(r+1)}$ in the next iteration using gradient descent, such that
\begin{equation}
\label{eq:nu}
\log \nu^{(r+1)}_k = \log \nu^{(r)}_k - \delta \left[
	\nu^{(r)}_k c_k \left( V_\mf + \sum_{j=1}^K  \frac{V_j}{\nu^{(r)}_j} \right) 
	- \frac{V_k}{\nu^{(r)}_k} \left( \bar c_\lo + \sum_{j=1}^K  c_j \nu^{(r)}_j \right)
\right].
\end{equation}
Note that we express this updating rule in terms of $\log \nu_k^{(r)}$ to ensure that each $\nu_k^{(r)}$ is positive, since the updates to $\nu^{(r)}$ are multiplicative.
As is typical of gradient-descent approaches, \Cref{eq:nu} requires the specification of the step size hyperparameter, $\delta$.
It is straightforward to show that $\nu^\star$ is the unique positive stationary point of \Cref{eq:nu}.
Furthermore, since each derivative $\partial \mathcal J_{\mathcal D} / \partial \log \nu_k$ is quadratic in the variables $c_j$, $V_j$, $\bar c_\lo$ and $V_\mf$, the numerical instability in estimating \Cref{eq:nustar} as a ratio does not occur when estimating these derivatives.
In relatively undersampled regions $D_k \in \mathcal D$ with small $\rho$-volume, the small values of $c_k$ and $V_k$ ensure that the convergence to the corresponding estimated optimal value, $\nu_k^\star$, is more conservative.

We now explicitly set out the Monte Carlo estimates of $c_j$, $V_j$, $\bar c_\lo$ and $V_\mf$.
These estimates can then be substituted into \Cref{eq:nu} to produce an updating rule for $\nu_k^{(r)}$.
This adaptive approach is then implemented into multifidelity likelihood-free importance sampling, as described in \Cref{alg:mf2}.

\begin{lem}
\label{lem:numc}
Suppose that $r$ iterations of \Cref{alg:mf} have been completed.
We denote:
\begin{itemize}
\item $c_{\lo, i}$ for the observed simulation cost of each $\mathbf y_{\lo, i}$;
\item $\omega_{\lo, i} = \omega_\lo(\theta_i, \mathbf y_{\lo, i})$ for the low-fidelity weighting calculated at each iteration;
\item $\mu_i = \mu^{(i)}(\theta_i, \mathbf y_{\lo,i})$ for the value of the mean function used to specify the random variable $M$ in iteration $i$; 
\item $m_i$ for the randomly drawn value of $M$ in iteration $i$, with mean $\mu_i$;
\item $c_{\hi, i, j}$ for the observed simulation cost of each $\mathbf y_{\hi, i, j}$ for $j = 1, \dots, m_i$ as the values of the $m_i$ high-fidelity weightings calculated in iteration $i$, noting that $m_i$ may be zero;
\item $\omega_{\hi, i, j} = \omega_\hi(\theta_i, \mathbf y_{\hi, i, j})$ for $j = 1, \dots, m_i$ as the values of the $m_i$ high-fidelity weightings calculated in iteration $i$; 
\item $\bar G_\hi^{(r)}$ as the current Monte Carlo estimate of $\bar G_\hi$;
\item $\Delta^{(r)}_i = \left[ G(\theta_i) - \bar G_\hi^{(r)} \right] \pi(\theta_i) / q(\theta_i)$ as the importance weighting, centred on the current estimate of $\bar G_\hi$.
\end{itemize}
The simulation-based Monte Carlo quantities
\begin{subequations}
\label{eq:numc}
\begin{align}
\bar c_\lo^{(r)} &= \frac{1}{r} \sum_{i=1}^r c_{\lo, i},
\\
V_{\mf}^{(r)} &= \frac{1}{r} \sum_{i=1}^r \left( \frac{\Delta_i^{(r)}}{\mu_i} \right)^2 \left[ \left( \sum_{j=1}^{m_i} \omega_{\hi, i, j} \right)^2 - \sum_{j=1}^{m_i} \omega_{\hi, i, j}^2 \right],
\\
c_k^{(r)} &= \frac{1}{r} \sum_{i=1}^r \mathbf I_{D_k}(\theta_i, \mathbf y_{\lo, i}) \frac{1}{\mu_i} \sum_{j=1}^{m_i} c_{\hi, i, j},
\\
V_k^{(r)} &= \frac{1}{r} \sum_{i=1}^r \mathbf I_{D_k}(\theta_i, \mathbf y_{\lo, i}) \frac{1}{\mu_i} \sum_{j=1}^{m_i} \left( \Delta_i^{(r)} \left( \omega_{\hi, i, j} - \omega_{\lo, i} \right) \right)^2,
\end{align}
\end{subequations}
are consistent estimates of $\bar c_\lo$, $V_\mf$, $c_k$, and $V_k$, respectively.
\end{lem}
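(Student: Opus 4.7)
I plan to treat each of the four estimators in turn by verifying, via conditioning, that each summand has the correct target expectation, then invoking a law of large numbers to conclude consistency. The central algebraic tools will be the first and second factorial moments of the Poisson distribution, $\mathbf E(M \mid \mu_i) = \mu_i$ and $\mathbf E(M(M-1) \mid \mu_i) = \mu_i^2$, which will exactly cancel the $1/\mu_i$ and $1/\mu_i^2$ denominators appearing in \Cref{eq:numc}. Throughout, the conditioning hierarchy will be: first on $(\theta_i, \mathbf y_{\lo,i}, m_i)$, then on $(\theta_i, \mathbf y_{\lo,i})$, and finally an outer expectation against $\rho(\theta, \mathbf y_\lo) = f_\lo(\mathbf y_\lo \mid \theta) q(\theta)$.

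\textbf{The simple estimators $\bar c_\lo^{(r)}$ and $c_k^{(r)}$.} Consistency of $\bar c_\lo^{(r)}$ is immediate from the strong law of large numbers, since $\mathbf E(c_{\lo,i} \mid \theta_i) = c_\lo(\theta_i)$ integrates against $q(\theta)$ to $\bar c_\lo$. For $c_k^{(r)}$, conditional on $(\theta_i, \mathbf y_{\lo,i}, m_i)$ the $m_i$ coupled high-fidelity simulation times are i.i.d.\ with mean $c_\hi(\theta_i, \mathbf y_{\lo,i})$. Summing these, dividing by $\mu_i$, and taking the expectation over $m_i \sim \mathrm{Poisson}(\mu_i)$ yields precisely $c_\hi(\theta_i, \mathbf y_{\lo,i}) \mathbf I_{D_k}(\theta_i, \mathbf y_{\lo,i})$, whose outer expectation against $\rho$ is $c_k$.

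\textbf{The variance estimators $V_\mf^{(r)}$ and $V_k^{(r)}$.} Here the key step for $V_\mf^{(r)}$ is the algebraic identity
\[
\Bigl(\sum_{j=1}^{m_i} \omega_{\hi,i,j}\Bigr)^2 - \sum_{j=1}^{m_i} \omega_{\hi,i,j}^2 = \sum_{j \ne j'} \omega_{\hi,i,j} \omega_{\hi,i,j'},
\]
which rewrites the bracket as a sum over $m_i(m_i-1)$ ordered pairs of conditionally independent high-fidelity weights. Each pair has conditional expectation $\lambda_\hi(\theta_i, \mathbf y_{\lo,i})^2$ given $(\theta_i, \mathbf y_{\lo,i})$. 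Taking the outer expectation over $m_i$ introduces the second factorial moment $\mu_i^2$, cancelling the $1/\mu_i^2$ prefactor; replacing $\Delta_i^{(r)}$ by $\Delta_q(\theta_i)$ (justified below) and integrating against $\rho$ recovers $V_\mf = \int \Delta_q(\theta)^2 \mathbf E(\lambda_\hi^2 \mid \theta) q(\theta) \mathrm d\theta$. The estimator $V_k^{(r)}$ is handled analogously to $c_k^{(r)}$, using $\mathbf E((\omega_{\hi,i,j} - \omega_{\lo,i})^2 \mid \theta_i, \mathbf y_{\lo,i}) = \eta(\theta_i, \mathbf y_{\lo,i})$ together with the first factorial moment.

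\textbf{The main obstacle: the centering term and the adaptive scheme.} Both $V_\mf^{(r)}$ and $V_k^{(r)}$ contain the factor $\Delta_i^{(r)} = [G(\theta_i) - \bar G_\hi^{(r)}] \pi(\theta_i)/q(\theta_i)$, in which $\bar G_\hi^{(r)}$ depends on \emph{all} iterations up to $r$, destroying independence and identical distribution of the summands; additionally, the adaptive choice $\mu_i = \mu^{(i)}(\theta_i, \mathbf y_{\lo,i})$ depends on the entire past. I would resolve this by decomposing $\Delta_i^{(r)} = \Delta_q(\theta_i) - (\bar G_\hi^{(r)} - \bar G_\hi) \pi(\theta_i)/q(\theta_i)$: the leading term yields the target via a martingale strong law, since each summand has the desired expectation conditional on the filtration of prior iterations, a value that does not depend on the adaptive $\mu^{(i)}$; meanwhile, the residual vanishes because $\bar G_\hi^{(r)} \to \bar G_\hi$ almost surely by \Cref{prop:mf_accuracy} combined with \Cref{thm:consistent}. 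This Slutsky-style argument, under the usual moment hypotheses that render the summands uniformly integrable, completes the consistency proof for all four estimators.
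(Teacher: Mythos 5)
Your proof is correct. The paper in fact states \Cref{lem:numc} without proof, so there is no official argument to compare against; but your route --- conditioning first on $(\theta_i,\mathbf y_{\lo,i},m_i)$, then on $(\theta_i,\mathbf y_{\lo,i})$, and cancelling the $1/\mu_i$ and $1/\mu_i^2$ prefactors with the first and second factorial moments of $M$ --- is precisely the computation implicit in the appendix proof of \Cref{thm:performance}, where the evaluation of $\mathbf E(D_m^2\mid\theta,\mathbf y_\lo)$ rests on the same pairwise conditional independence that underlies your identity $\bigl(\sum_j\omega_{\hi,i,j}\bigr)^2-\sum_j\omega_{\hi,i,j}^2=\sum_{j\neq j'}\omega_{\hi,i,j}\,\omega_{\hi,i,j'}$. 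Two remarks. First, your handling of the adaptive aspects --- the dependence of $\mu^{(i)}$ and of $\bar G_\hi^{(r)}$ on the entire history, resolved by the decomposition $\Delta_i^{(r)}=\Delta_q(\theta_i)-(\bar G_\hi^{(r)}-\bar G_\hi)\pi(\theta_i)/q(\theta_i)$ together with a martingale law of large numbers and a Slutsky step --- addresses the only genuinely delicate point in the statement, one the paper passes over in silence; the moment/uniform-integrability hypotheses you flag are indeed needed to make it rigorous. Second, it is worth stating explicitly that $\bar c_\lo^{(r)}$, $c_k^{(r)}$ and $V_k^{(r)}$ are consistent for \emph{any} conditional distribution of $M$ with mean $\mu_i$, since only the first moment enters, whereas $V_\mf^{(r)}$ genuinely requires $\mathbf E\bigl(M(M-1)\mid\theta,\mathbf y_\lo\bigr)=\mu_i^2$, i.e.\ the Poisson assumption of \Cref{thm:performance}; under the binomial or geometric alternatives discussed in the appendix this estimator would acquire a multiplicative bias of $(1-1/M_{\max})$ or $2$, respectively.
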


Substituting the estimates in \Cref{eq:numc} into the updating rule, \Cref{eq:nu}, we use 
\begin{equation}
\label{eq:update}
\log \nu_k^{(r+1)} = \log \nu_k^{(r)} - \delta \left[
	\nu^{(r)}_k c_k^{(r)} \left( V_\mf^{(r)} + \sum_{j=1}^K  \frac{V_j^{(r)}}{\nu^{(r)}_j} \right) 
	- \frac{V_k^{(r)}}{\nu^{(r)}_k} \left( \bar c_\lo^{(r)} + \sum_{j=1}^K  c_j^{(r)} \nu^{(r)}_j \right),
\right]
\end{equation}
to update $\nu^{(r)}$ to $\nu^{(r+1)}$ in adaptive multifidelity likelihood-free importance sampling, as outlined in \Cref{alg:mf2}.
In addition to the specification of the step size hyperparameter, $\delta$, \Cref{alg:mf2} also requires a burn-in phase, $N_0$, to initialise the Monte Carlo estimates in \Cref{eq:numc}.
The partition, $\mathcal D = \left\{ D_1, \dots, D_K \right\}$, is also an input into \Cref{alg:mf2}.
We defer an investigation of how to choose this partition to future work.
For the purposes of this paper, however, we can heuristically construct a partition, $\mathcal D$, by fitting a decision tree.
We use the burn-in phase of \Cref{alg:mf2}, over iterations $i \leq N_0$, and regress the values of
\[
\mu^\star_i = \left|\Delta_i^{(N_0)}\right| \sqrt{\frac{\sum_j (\omega_{\hi,i,j} - \omega_{\lo,i})^2}{\sum_j c_{\hi, i, j}}},
\]
against features $(\theta_i, \mathbf y_{\lo,i})$, using the CART algorithm \citep{Hastie2009} as implemented in \texttt{DecisionTrees.jl}.
Note that this regression is motivated by the form of the true optimal mean function, $\mu^\star$, given in \Cref{eq:mustar}.
The resulting decision tree defines a partition, $\mathcal D = \{ D_1, \dots, D_K \}$, used to define the piecewise-constant mean function $\mu_{\mathcal D}(\theta, y_\lo;~\nu)$ over $i > N_0$.

\begin{algorithm}
\caption{Adaptive multifidelity likelihood-free importance sampling.}
\label{alg:mf2}
\begin{algorithmic}
\Require Prior, $\pi$; importance distribution, $q$; likelihood-free weightings, $\omega_\hi$ and $\omega_\lo$; models $f_\hi(\cdot \mid \theta)$ and $f_\lo(\cdot \mid \theta)$; partition $\mathcal D = \{ D_1, \dots, D_K \}$ of $(\theta, \mathbf y_\lo)$ space; adaptation rate, $\delta$; burn-in period, $N_0$; stop condition, \texttt{stop}; target estimated function, $G$.
\Statex
\State Set counter $i=0$.
\State Initialise $\log \nu_k^{(1)} = 0$ for $k=1,\dots,K$.
\Repeat 
	\State Increment counter $i \gets i+1$;
	\State Sample $\theta_i \sim q(\cdot)$;
	\State Generate $\mathbf z_i \sim \phi(\cdot\mid\theta_i)$ from \Call{MF-Simulate}{$\theta_i, \nu^{(i)}$};
	\State For $\omega_\mf$ in \Cref{mf:ell}, calculate the weight
		\begin{equation}
		\label{eq:w_mf2}
			w_i = w_\mf(\theta, \mathbf z_i) = \frac{\pi(\theta_i)}{q(\theta_i)} \omega_\mf(\theta_i, \mathbf z_i).
		\end{equation}
	\If{$i>N_0$}
		\State Generate $\nu^{(i+1)}$ from \Call{Update-Nu}{$\nu^{(i)}$}
	\Else
		\State Set $\nu^{(i+1)} = \nu^{(i)}$.
	\EndIf
\Until{\texttt{stop} = \texttt{true}}
\State \Return Weighted sum,
\[
\hat G_\mf = \sum_{i=1}^N w_i G(\theta_i) \bigg/ \sum_{j=1}^N w_j.
\]
\Statex
\Function{MF-Simulate}{$\theta, \nu$}
    \State Simulate $\mathbf y_\lo \sim f_\lo(\cdot \mid \theta)$;
	\State Find $k$ such that $(\theta, \mathbf y_\lo) \in D_k$;
	\State Generate $m \sim \mathrm{Poi}(\nu_k)$;
	\For{$i=1,\dots,m$}
		\State Simulate $\mathbf y_{\hi, i} \sim f_\hi(\cdot \mid \theta, \mathbf y_\lo)$;
	\EndFor
    \State \Return $\mathbf z = (\mathbf y_\lo, m, \mathbf y_{\hi, 1}, \dots, \mathbf y_{\hi, m})$
\EndFunction
\Statex
\Function{Update-Nu}{$\nu_1, \dots, \nu_K$}
	\State Update Monte Carlo estimates defined in \Cref{eq:numc};
	\For{$k=1,\dots,K$}
		\State Increment $\log \nu_k$ according to \Cref{eq:update};
	\EndFor
	\State \Return $\nu = (\nu_1, \dots, \nu_k)$
\EndFunction
\end{algorithmic}
\end{algorithm}

\section{Example: Biochemical reaction network}
\label{s:eg}
The following example considers the stochastic simulation of a biochemical reaction motif.
Readers unfamiliar with these techniques are referred to detailed expositions by \citet{Warne2019} and \citet{Erban2019}.
We model the conversion (over time $t \geq 0$) of substrate molecules, labelled $\mathrm S$, into molecules of a product, $\mathrm P$.
The conversion of $\mathrm S$ into $\mathrm P$ is catalysed by the presence of enzyme molecules, $\mathrm E$, which bind with $\mathrm S$ to form molecules of complex, labelled $\mathrm C$.
After non-dimensionalising units of time and volume, this network motif is represented by three reactions,
\begin{subequations}
\label{eq:enzyme_hi}
\begin{equation} 
\label{eq:enzyme_hi_net}
\ce{S + E <=>[$k_1$][$k_2$] C ->[$k_3$] P + E}, 
\end{equation}
parametrised by the vector $\theta = (k_1, k_{-1}, k_2)$ of positive parameters, $k_1$, $k_{-1}$, and $k_2$, which define three propensity functions,
\begin{align}
v_1(t) &= k_1 S(t) E(t), \\
v_2(t) &= k_{-1} C(t), \\
v_3(t) &= k_2 C(t),
\end{align}
\end{subequations}
where the integer-valued variables $S(t)$, $E(t)$, $C(t)$ and $P(t)$ represent the molecule numbers at time $t>0$.
At $t=0$, we assume there are no complex or product molecules, but set positive integer numbers $S_0 = 100$ and $E_0 = 5$ of substrate and enzyme molecules, respectively.
Given the fixed initial conditions, the parameters in $\theta$ are sufficient to specify the dynamics of the model in \Cref{eq:enzyme_hi_net}.
The model is stochastic, and induces a distribution, which we denote $f(\cdot \mid \theta)$, on the space of trajectories $x:t \mapsto (S(t), E(t), C(t), P(t))$ of molecule numbers in $\mathbb N^4$ over time.

For the purposes of this example, the observed data 
\[
y_0 = (y_1, \dots, y_{10}) = (1.73, 3.80, 5.95, 8.10, 11.17, 12.92, 15.50, 17.75, 20.17, 23.67),
\]
depicted in \Cref{fig:trajfig}, are the times at which the number of product molecules reaches $P(y_n) = 10n$.
We set a prior $\pi(\theta)$ on the vector $\theta$, equal to a product of independent uniform distributions such that $k_1, k_{-1} \sim \mathrm U(10, 100)$ and $k_2 \sim \mathrm U(0.1, 10)$.
We seek the posterior distribution $\pi(\theta \mid y_0)$ using the likelihood, denoted $\mathcal L(\theta) = f(y_0 \mid \theta)$, focusing on the posterior expectation of the function $G(\theta) = k_2$, denoting the rate of conversion of substrate--enzyme complex to product.

All code for this example is available at \href{https://github.com/tpprescott/mf-lf}{\texttt{github.com/tpprescott/mf-lf}}, using stochastic simulations implemented by \href{https://github.com/tpprescott/ReactionNetworks.jl}{\texttt{github.com/tpprescott/ReactionNetworks.jl}}.

\subsection{Multifidelity approximate Bayesian computation}
\label{s:mfabc}
\subsubsection{ABC importance sampling}
We assume that we cannot calculate the likelihood function, $\mathcal L(\theta) = f(y_0 \mid \theta)$.
Instead, we need to use simulations to perform ABC.
Given $\theta$, the model in \Cref{eq:enzyme_hi} can be exactly simulated using the Gillespie stochastic simulation algorithm, to produce draws $y \sim f(\cdot \mid \theta)$ from the exact model \citep{Gillespie1977,Erban2019,Warne2020}.
We will use the ABC likelihood-free weighting with threshold value $\epsilon = 5$ on the Euclidean distance of the simulation from $y_0$, such that
\[
\omega(\theta, y) = \mathbf 1(\|y - y_0\|_2 < 5),
\]
to define the likelihood-free approximation to the posterior, $L_\ABC(\theta) = \mathbf E(\omega \mid \theta)$.
We combine this likelihood-free weighting in \Cref{alg:is} with a rejection sampling approach, setting the importance distribution $q = \pi$ equal to the prior.

\subsubsection{Multifidelity ABC}
The exact Gillespie stochastic simulation algorithm can incur significant computational burden. 
In the specific case of the network in \Cref{eq:enzyme_hi}, if the reaction rates $k_{\pm 1}$ are large relative to $k_2$, there are large numbers of binding/unbinding reactions $\ce{S + E <-> C}$ that occur in any simulation.
In comparison, the reaction $\ce{C -> P + E}$ can only fire exactly $100$ times.
Michaelis--Menten dynamics exploit this scale separation to approximate the enzyme kinetics network motif.
We approximate the conversion of substrate into product as a single reaction step,
\begin{subequations}
\label{eq:enzyme_lo}
\begin{equation}
\label{eq:enzyme_lo_net}
\ce{S ->[$k_{\mathrm{MM}}(t)$] P},
\end{equation}
where the time-varying rate of conversion, $k_{\mathrm{MM}}(t)$, given by
\begin{align}
k_{\mathrm{MM}}(t) &= \frac{k_2 \min(S(t), E_0)}{K_{\mathrm{MM}} + S(t)}, \\
K_{\mathrm{MM}} &= \left( k_{-1} + k_2 \right) / k_1,
\end{align}
\end{subequations}
induces the propensity function $v_{\mathrm{MM}}(t) = k_{\mathrm{MM}}(t) S(t)$.
We assume initial conditions of $S(0)=S_0=100$ and $P(0)=0$, and fix the parameter $E_0 = 5$.
Thus, the parameter vector, $\theta = (k_1, k_{-1}, k_2)$, again fully determines the dynamics of the low-fidelity model in \Cref{eq:enzyme_lo}.
We write $f_\lo(y_\lo \mid \theta)$ as the conditional probability density for the Gillespie simulation of the approximated model in \Cref{eq:enzyme_lo}, where $y_\lo$ is the vector of ten simulated time points $y_{\lo, n}$ at which $10n$ product molecules have been produced.

For a biochemical reaction network consisting of $R$ reactions, the Gillespie simulation algorithm is a deterministic transformation of $R$ independent unit-rate Poisson processes, one for each reaction channel.
We can couple the models in \Cref{eq:enzyme_hi,eq:enzyme_lo} by using the same Poisson process for the single reaction in \Cref{eq:enzyme_lo} and for the product formation \ce{C -> P + E} reaction of \Cref{eq:enzyme_hi} \citep{Prescott2020,Lester2019}.
Using this coupling approach, we first simulate $y_\lo \sim f_\lo(\cdot \mid \theta)$ from \Cref{eq:enzyme_lo}.
We then produce the coupled simulation $y_\hi \sim f_\hi(\cdot \mid \theta, y_\lo)$ from the model in \Cref{eq:enzyme_hi}, using the shared Poisson process.
We set the corresponding likelihood-free weightings to
\begin{align*}
\omega_\hi(\theta, y_\hi) = \mathbf I(|y_\hi-y_0|<5), \\
\omega_\lo(\theta, y_\lo) = \mathbf I(|y_\lo-y_0|<5),
\end{align*}
noting that $\mathbf E(\omega_\hi \mid \theta) = L_\ABC(\theta)$ is the high-fidelity ABC approximation to the likelihood.
\Cref{fig:trajfig} illustrates the effect of coupling between low-fidelity and high-fidelity models.
The five coupled high-fidelity simulations are significantly less variable than the independent high-fidelity simulations, appearing almost coincident in \Cref{fig:trajfig}.
This ensures a large degree of correlation between the coupled likelihood-free weightings, $\omega_\hi$ and $\omega_\lo$.
Thus, coupling ensures that $\omega_\lo$ is a reliable proxy for $\omega_\hi$ for use in multifidelity likelihood-free inference.

\begin{figure}
\centering
\includegraphics[width=\textwidth]{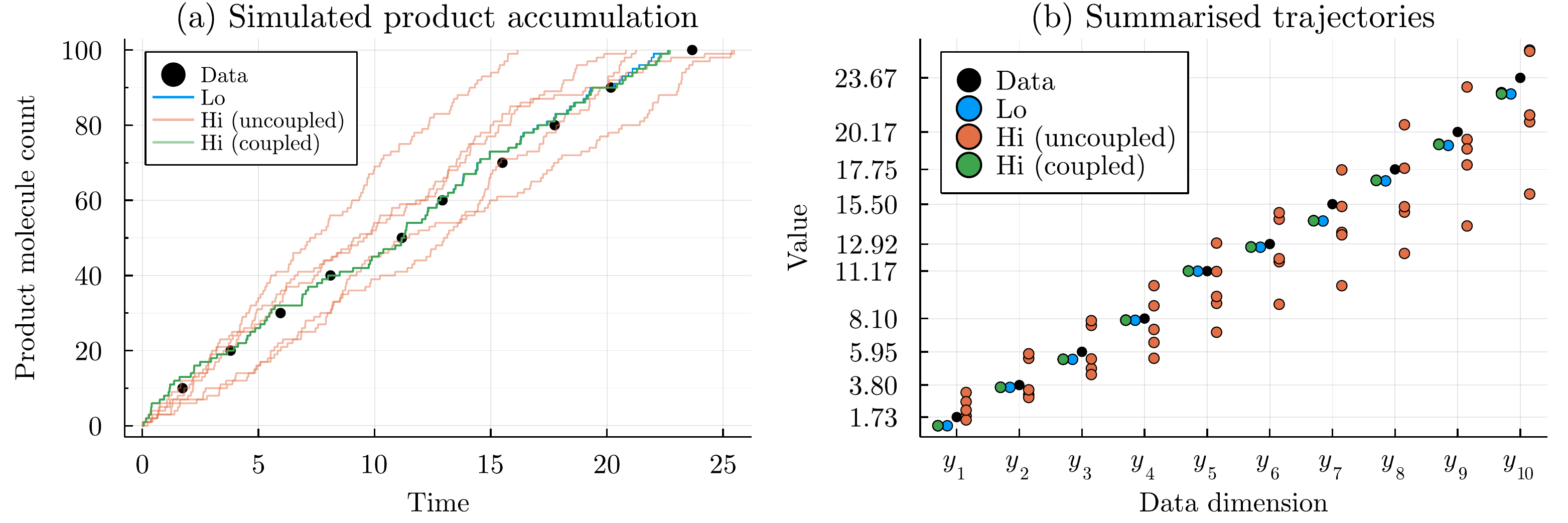}
\caption{%
	Effect of multifidelity coupling.
	(a) Example stochastic trajectories from the high and low-fidelity enzyme kinetics models in \Cref{eq:enzyme_hi,eq:enzyme_lo} for parameters $\theta = (k_1, k_{-1}, k_2) = (50, 50, 1)$, compared with data used for inference.
	For one low-fidelity simulation, we generate five uncoupled simulations and five coupled simulations.
	(b) Ten-dimensional data summarising simulated trajectories in (a). Black represents observed data, $y_0$; the single low-fidelity simulation $y_\lo \sim f_\lo(\cdot \mid \theta)$ is in blue; five uncoupled simulations $y_\hi \sim f_\hi(\cdot \mid \theta)$ are in orange; five coupled simulations $y_\hi \sim f_\hi(\cdot \mid \theta, y_\lo)$ are in green (almost coincident).
}%
\label{fig:trajfig}
\end{figure}

We implement \Cref{alg:mf2} by setting a burn-in period of $N_0 = 10,000$, for which we generate $m_i \sim M = \mathrm{Poi}(1)$ high-fidelity simulations at each iteration, $i \leq N_0$.
Once the burn-in period is complete, we define the partition $\mathcal D$ by learning a decision tree through a simple regression, as described in \Cref{s:mfimplementation}.
For iterations $i>N_0$ beyond the burn-in period, we set a step size of $\delta=10^3$ for the gradient descent update in \Cref{eq:nu}.

\subsubsection{Results}
\Cref{alg:is} was run four times, setting the \texttt{stop} condition to $i=10,000$, $i=20,000$, $i=40,000$ and $i=80,000$.
Similarly, \Cref{alg:mf2} was run five times, setting the \texttt{stop} condition to $i=40,000$, $i=80,000$, $i=160,000$, $i=320,000$ and $i=640,000$.
\Cref{fig:ABC}a shows how the variance in the estimate, $\hat G$, varies with the total simulation cost, $C_{\mathrm{tot}}$, shown for each of the two algorithms.
The slope of each curve (on a log-log scale) is approximately $-1$, corresponding to the dominant behaviour of the MSE being reciprocal with total simulation time, as observed in \Cref{eq:perf_mf}.
The offset in the two curves corresponds to the inequality $\mathcal J_\mf < \mathcal J_\hi$ in the leading order coefficient, thereby demonstrating the improved performance of \Cref{alg:mf2} over \Cref{alg:is}.

\begin{figure}
\centering
\includegraphics[width=\textwidth]{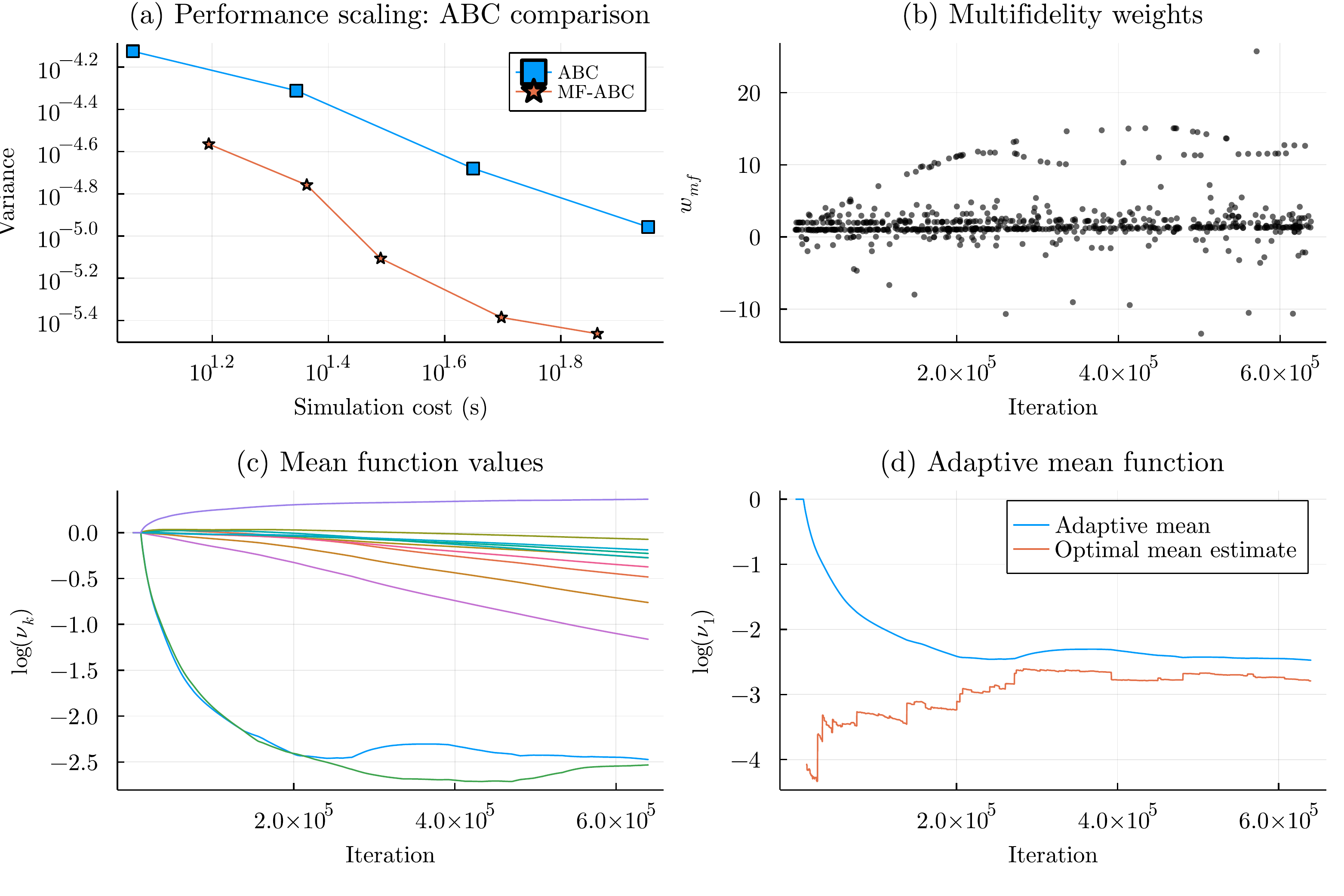}
\caption{%
	Multifidelity ABC.
	(a) Total simulation cost versus estimated variance of output estimate, $\hat G$, for four runs of \Cref{alg:is} (ABC) and five runs of \Cref{alg:mf2} (MF-ABC).
	(b) Values of the multifidelity weight, $w_\mf$, during the longest run of \Cref{alg:mf2}, for iterations where $\omega_\mf \neq \omega_\lo$, such that the low-fidelity likelihood-free weighting is corrected based on at least one high-fidelity simulation.
	(c) The evolution of the values of $\nu_k^{(i)}$ during the longest run of \Cref{alg:mf2}.
	(d) A comparison of the adaptive $\nu_1^{(i)}$ to the evolving best estimate of the optimal $\nu_1^\star$, given by \Cref{eq:nustar}, based on the Monte Carlo estimates in \Cref{eq:numc}.
}%
\label{fig:ABC}
\end{figure}

The values in \Cref{fig:ABC}b show the multifidelity weights, $w_i$.
We show only those weights not equal to zero or one, corresponding to those iterations where $\omega_\lo(\theta_i, y_{\lo,i})$ has been corrected by at least one $\omega_\hi(\theta_i, y_{\hi,i,j}) \neq \omega_\lo(\theta_i, y_{\lo,i})$.
Clearly there is a significant amount of correction applied to the low-fidelity weights.
However, as demonstrated by the improved performance statistics, \Cref{alg:mf2} has learned the required allocation of computational budget to the high-fidelity simulations that balances the trade-off between achieving reduced overall simulation times and correcting inaccuracies in the low-fidelity simulation.

Each run of \Cref{alg:mf2} includes a burn-in period of $10,000$ iterations, at the conclusion of which a partition $\mathcal D$ is created, based on decision tree regression.
In \Cref{s:mean_eg}, we show how this decision tree is used to define a piecewise-constant mean function, specifically for the partition $\mathcal D$ used for the final run of \Cref{alg:mf2} (i.e. for stopping condition $i=640,000$).
In \Cref{fig:ABC}c, we show the evolution of the values of $\nu_k^{(i)}$ used in this mean function, over iterations $i$.
Following the updating rule in \Cref{eq:update}, the trajectory of $\nu_k^{(i)}$ converges exponentially towards a Monte Carlo estimate of the optimal value $\nu_k^\star$ given in \Cref{eq:nustar}.
However, we can see from \Cref{fig:ABC}c that, as more simulations are completed and the Monte Carlo estimates in \Cref{eq:numc} evolve, the values of each parameter, $\nu_k$, track updated estimates.
This is illustrated in \Cref{fig:ABC}d for $\nu_1$, where the estimated optimum $\nu_1^\star$ evolves as more simulations are completed.
We note that the gradient descent update in \Cref{eq:update} at iteration $i$ depends on \emph{all} $\nu_k^{(i)}$ values.
Thus, the observed convergence of $\nu_1^{(i)}$ to the evolving estimate of $\nu_1^\star$ is not necessarily monotonic.

\Cref{fig:ABC}d illustrates the motivation for the use of gradient descent rather than simply using the analytically obtained optimum.
When very few simulations have been completed, then the estimates in \Cref{eq:numc} are small and their ratios are numerically unstable, and often far from the true optimum.
If $\nu_k^{(i)}$ values are too small in early iterations, then estimates become \emph{more} numerically unstable, since fewer high-fidelity simulations are completed for small $\mu$.
Instead, using gradient descent ensures that enough high-fidelity simulations are completed for each $\mathcal D_k$, including those with low volume under the measure $\rho$, to stabilise the estimates in \Cref{eq:numc} and thus stabilise the multifidelity algorithm.

%\begin{figure}
%\centering
%\includegraphics[width=4in]{mu}
%\caption{%
%	Evolution of $\nu_k$ for the piecewise-constant function $\mu(\theta, y_\lo; \mathcal D, \nu)$ produced during the final run of \Cref{alg:mf2} used in \Cref{fig:compare_mfabc}.
%	The computation of $\mu$ according to the decision tree partition $\mathcal D$ is given in \Cref{alg:mu}, for $\nu_k$ values corresponding to the final iteration.
%}%
%\label{fig:mu}
%\end{figure}

%\begin{figure}
%\centering
%\includegraphics[width=4in]{w_mf}
%\caption{%
%	Multifidelity likelihood-free weightings $w_i$ produced during the final run of \Cref{alg:mf2} used in \Cref{fig:compare_mfabc}. We neglect $w_i = 0$ and $w_i = 1$.
%}%
%\label{fig:w_mf}
%\end{figure}

\subsection{Multifidelity Bayesian synthetic likelihood}\label{s:bsl}
Consider the same model of enzyme kinetics as in \Cref{s:mfabc}.
As depicted in \Cref{fig:trajfig}, this model has low-fidelity (Michaelis--Menten) stochastic dynamics with distribution $f_\lo(\cdot\mid\theta)$, and coupled high-fidelity stochastic dynamics with distribution $f_\hi(\cdot\mid\theta,y_\lo)$.
We now redefine $\omega_\lo$ and $\omega_\hi$ to be Bayesian synthetic likelihoods, based on $K$ pairs of coupled simulations, 
\begin{align*}
y_{\lo,k} &\sim f_\lo(\cdot \mid \theta), \\
y_{\hi,k} &\sim f_\hi(\cdot \mid \theta, y_{\lo,k}),
\end{align*}
for $k=1,\dots,K$. That is,
\begin{align*}
\omega_\lo(\theta, \mathbf y_\lo) &= \mathcal N \left( y_0 : \mu(\mathbf y_\lo), \Sigma(\mathbf y_\lo) \right), \\
\omega_\hi(\theta, \mathbf y_\hi) &= \mathcal N \left( y_0 : \mu(\mathbf y_\hi), \Sigma(\mathbf y_\hi) \right),
\end{align*}
are the Gaussian likelihoods of the observed data, under the empirical mean and covariance of $K$ low-fidelity and (coupled) high-fidelity simulations, respectively.

\Cref{alg:is} was run three times, using $\omega_\hi(\theta, \mathbf y_\hi)$ dependent on high-fidelity simulations $\mathbf y_\hi \sim f(\cdot \mid \theta)$, alone, and setting the \texttt{stop} condition to $i=2,500$, $i=5,000$ and $i=10,000$.
Similarly, \Cref{alg:mf2} was run four times using the coupled multifidelity model, setting the \texttt{stop} condition to $i=4,000$, $i=8,000$, $i=16,000$ and $i=32,000$, and initialising with a burn-in of size $N_0 = 2,000$.
The adaptive step size is set to $\delta = 10^8$.
In both algorithms, we set the number of simulations required for each evaluation of $\omega_\hi(\theta, (y_{\hi,1},\dots,y_{\hi,K}))$ or $\omega_\lo(\theta, (y_{\lo,1},\dots,y_{\lo,K}))$ as $K=100$.

\begin{figure}
\centering
\includegraphics[width=\textwidth]{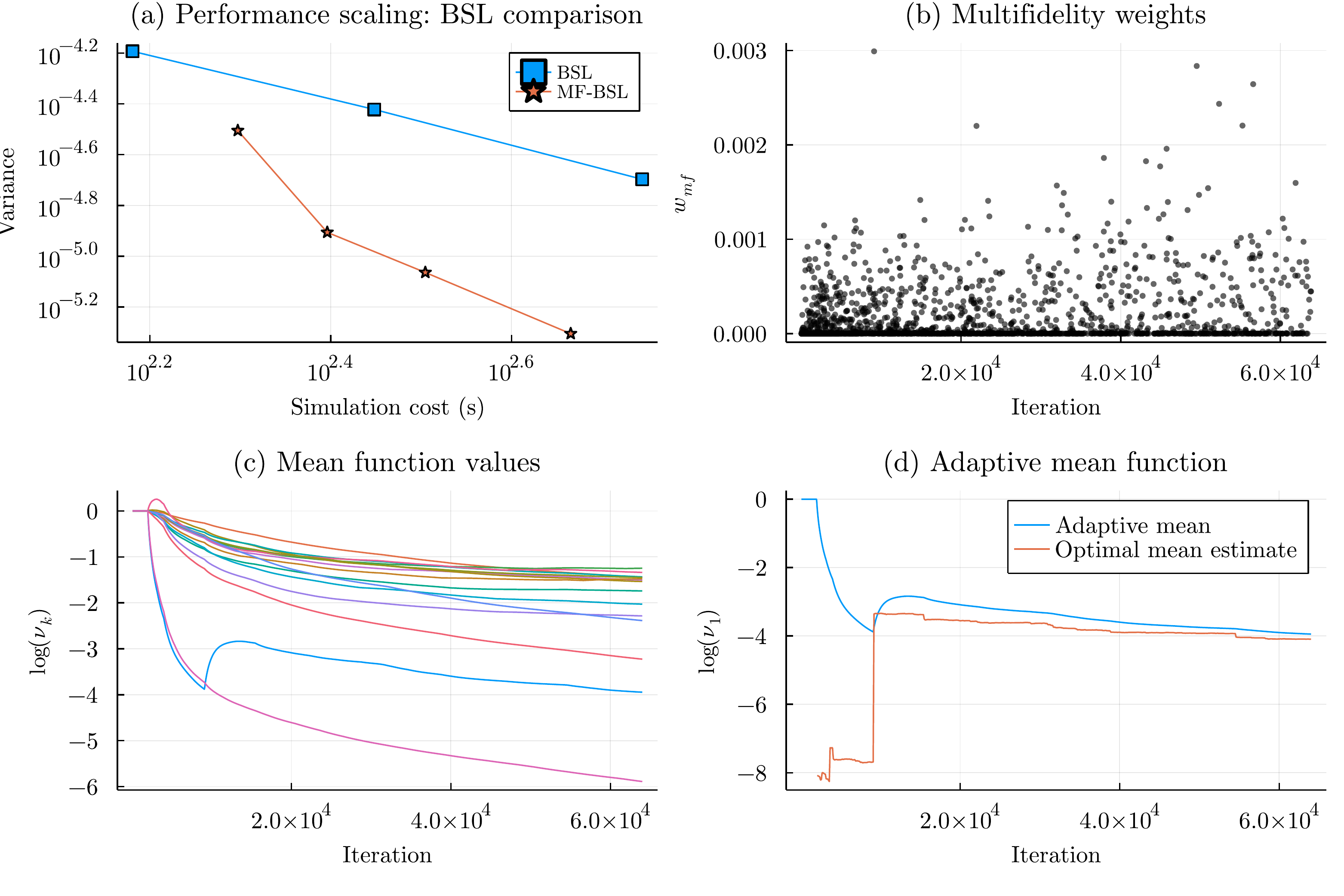}
\caption{%
	Multifidelity BSL.
	(a) Total simulation cost versus estimated variance of output estimate, $\hat G$, for three runs of \Cref{alg:is} (BSL) and four runs of \Cref{alg:mf2} (MF-BSL).
	(b) Values of the multifidelity weight, $w_\mf$, during the longest run of \Cref{alg:mf2}, for iterations where $\omega_\mf \neq \omega_\lo$, such that the low-fidelity likelihood-free weighting is corrected based on at least one high-fidelity simulation.
	(c) The evolution of the values of $\nu_k^{(i)}$ during the longest run of \Cref{alg:mf2}.
	(d) A comparison of the adaptive $\nu_1^{(i)}$ to the evolving best estimate of the optimal $\nu_1^\star$, given by \Cref{eq:nustar}, based on the Monte Carlo estimates in \Cref{eq:numc}.
}%
\label{fig:BSL}
\end{figure}

\Cref{fig:BSL} depicts the performance of multifidelity Bayesian synthetic likelihood (BSL) inference, where \Cref{alg:mf2} is applied with BSL likelihood-free weightings, $\omega_\lo$ and $\omega_\hi$.
As with MF-ABC, \Cref{fig:BSL}a shows that the MF-BSL generates improved performance over high-fidelity BSL inference, achieving lower variance estimates for a given computational budget.
We also note in \Cref{fig:BSL}a that the curve corresponding to MF-BSL has slope less than $-1$.
This is due to (a) the overhead cost of the initial burn-in period of \Cref{alg:mf2}, and also (b) the conservative convergence of $\nu^{(i)}$ to the optimum, as shown in \Cref{fig:BSL}c--d.
Both observations imply that earlier iterations are less efficiently produced than later iterations, meaning that larger samples show greater improvements than expected from the reciprocal relationship in \Cref{eq:perf_mf}.

Comparing \Cref{fig:BSL}b to \Cref{fig:ABC}b, we note that there are very few negative multifidelity weightings in MF-BSL, in comparison to MF-ABC.
We can conclude that the Bayesian synthetic likelihood, constructed using low-fidelity simulations, tends to underestimate the likelihood of the observed data compared to using high-fidelity simulations.
We note also in this comparison that the multifidelity likelihood-free weightings are on significantly different scales.

\section{Discussion}\label{s:end}
The characteristic computational burden of simulation-based, likelihood-free Bayesian inference methods is often a barrier to their successful implementation.
Multifidelity simulation techniques have previously been shown to improve the efficiency of likelihood-free inference in the context of ABC.
In this work, we have demonstrated that these techniques can be readily applied to general likelihood-free approaches.
Furthermore, we have introduced a computational methodology for automating the multifidelity approach, adaptively allocating simulation resources across different fidelities in order to ensure near-optimal efficiency gains from this technique.
As parameter space is explored, our methodology, given in \Cref{alg:mf2}, learns the relationships between simulation accuracy and simulation costs at the different fidelities, and adapts the requirement for high-fidelity simulation accordingly.

The multifidelity approach to likelihood-free inference is one of a number of strategies for speeding up inference, which include MCMC and SMC sampling techniques~\citep{Marjoram2003,Sisson2007,Toni2009} and methods for variance reduction such as multilevel estimation~\citep{Giles2015,Guha2017,Warne2018,Jasra2019}.
A key observation in the previous work of \citet{Prescott2021} and \citet{Warne2021} is that applying multifidelity techniques provides `orthogonal' improvements that combine synergistically with these other established approaches to improving efficiency.
Similarly, we envision that \Cref{alg:mf2} can be adapted into an SMC or multilevel algorithm with minimal difficulty, following the templates set by \citet{Prescott2021} and \citet{Warne2021}.

The multifidelity approach discussed in this work is a highly flexible generalisation of existing multifidelity techniques, which can be viewed as special cases of \Cref{alg:mf}.
In each of MF-ABC~\citep{Prescott2020,Prescott2021}, LZ-ABC~\citep{Prangle2016}, and DA-ABC~\citep{Everitt2021}, it is assumed that $\omega_\hi$ is an ABC likelihood-free weighting, which we relax in this work.
Furthermore, LZ-ABC and DA-ABC both use $\omega_\lo \equiv 0$, so that parameters are always rejected if no high-fidelity simulation is completed.
Clearly, we relax this assumption to allow for any low-fidelity likelihood-free weighting.
In all of MF-ABC, LZ-ABC and DA-ABC, the conditional distribution of $M$, given a parameter value $\theta$ and low-fidelity simulation output $\mathbf y_\lo$ is Bernoulli distributed, with mean $\mu(\theta, \mathbf y_\lo) \in (0,1]$.
In this work we change this distribution to Poisson, to ease analytical results, but any conditional distribution for $M$ can be used.
These adaptations are explored further in \Cref{s:unconstrained_optimisation}.

In the case of MF-ABC (as originally formulated by \citet{Prescott2020}) and DA-ABC \citep{Christen2005,Everitt2021}, the mean function, $\mu(\theta, y_\lo)$, depends on a single low-fidelity simulation and is assumed to be piecewise constant in the value of the indicator function $\mathbf 1(d(y_\lo, y_0) < \epsilon)$. 
LZ-ABC is more generic in its definition of $\mu = \mu(\phi(\theta, \mathbf y_\lo))$ to depend on the value of any \emph{decision statistic}, $\phi$.
In this work, we consider more general piecewise constant mean functions, $\mu_{\mathcal D}$, for heuristically derived partitions $\mathcal D$ of $(\theta, \mathbf y_\lo)$-space.
We observe that $(\theta, \mathbf y_\lo)$ may be of very high dimension; in the BSL example in \Cref{s:bsl}, having $K=100$ low-fidelity simulations $y_\lo \in \mathbb R^{10}$ means that the input to $\mu$ is of dimension $1003$.
In this situation, it may be tempting to seek a mean function that only depends on $\theta$.
However, we recall that the {optimal} mean function, $\mu^\star(\theta, \mathbf y_\lo)$, derived in \Cref{lem:mustar}, depends on the conditional expectation $\mathbf E((\omega_\hi - \omega_\lo)^2 \mid \theta, \mathbf y_\lo)$.
Thus, by ignoring $\mathbf y_\lo$, we would ignore the information about $\omega_\hi$ given by the evaluation of $\omega_\lo(\theta, \mathbf y_\lo)$.
Furthermore, the high dimension of the inputs to $\mu^\star$ suggest that this function is not necessarily well-approximated by a decision tree.
Future work may focus on methods to learn the optimal mean function directly without resorting to piecewise constant approximations~\citep{Levine2021}.
The key problem is ensuring the conservatism of any alternative estimate of $\mu^\star$, recalling that the variance of $w_\mf$ is inversely proportional to $\mu$.

In the example explored in \Cref{s:eg}, we considered the use of \Cref{alg:mf2} where $\omega_\hi$ and $\omega_\lo$ were first both ABC likelihood-free weightings, and then both BSL likelihood-free weightings.
In principle, this method should also allow for $\omega_\lo$ to be, for example, an ABC likelihood-free weighting based on a single low-fidelity simulation, and $\omega_\hi$ to be a BSL likelihood-free weighting based on $K>1$ high-fidelity simulations.
However, the success of the multifidelity method depends explicitly on the function $\eta(\theta, \mathbf y_\lo) = \mathbf E((\omega_\hi - \omega_\lo)^2 \mid \theta, \mathbf y_\lo)$ being sufficiently small, as quantified in \Cref{cor:muexist}.
If $\omega_\lo$ and $\omega_\hi$ are on different scales, as is likely when one is an ABC weighting and one a BSL weighting, then this function is not sufficiently small in general, and so the multifidelity approach fails.
We note, however, that we could instead consider the scaled low-fidelity weighting, $\tilde \omega_\lo = \gamma \mathbf \omega_\lo$, in place of $\omega_\lo$ in \Cref{alg:mf,alg:mf2} with no change to the target distribution.
Here, $\gamma$ is an additional parameter that can be tuned with $\mu$ when minimising the performance metric, $\mathcal J_\mf$; the optimal value of this parameter would need to be learned in parallel with the optimal mean function, $\mu$.
We defer this adaptation to future work.

Finally, this work follows \citet{Prescott2020,Prescott2021} in considering only a single low-fidelity model.
There is significant scope for further improvements by applying these approaches to suites of low-fidelity approximations~\citep{Gorodetsky2021}.
For example, exact stochastic simulations of biochemical networks, such as that simulated in \Cref{s:eg}, may also be approximated by tau-leaping~\citep{Gillespie2001,Warne2019}, where the time discretisation parameter $\tau$ tends to be chosen to trade off computational savings against accuracy: exactly the trade-off explored in this work.
Clearly, this parameter therefore has important consequences for the success of a multifidelity inference approach using such an approximation strategy.
More generally, a full exploration of the use of \emph{multiple} low-fidelity model approximations will be vital for the full potential of multifidelity likelihood-free inference to be realised.

\subsection*{Acknowledgements}
{\small\itshape
REB and TPP acknowledge funding for this work through the BBSRC/UKRI grant BB/R00816/1.
TPP is supported by the Alan Turing Institute and by Wave 1 of the UKRI Strategic Priorities Fund, under the ``Shocks and Resilience'' theme of the EPSRC/UKRI grant EP/W006022/1.
DJW thanks the Australian Mathematical Society for the Lift-off Fellowship, and acknowledges continued support from the Centre for Data Science at QUT and the ARC Centre of Excellence in Mathematical and Statistical Frontiers (ACEMS; CE140100049).
REB is supported by a Royal Society Wolfson Research Merit Award.
}

%----------------------------------------------------------------
%----------------------------------------------------------------
\newpage
\appendix
\section{Analytical results: Comparing performance}
\label{s:unconstrained_optimisation}

\subsection{\Cref{thm:performance}}
\begin{proof}
The leading order performance of each of \Cref{alg:is} and \Cref{alg:mf} is given in terms of increasing computational budget, $C_{\mathrm{tot}}$, in \Cref{eq:perf_hi} and \Cref{eq:perf_mf}, respectively.
For the performance of \Cref{alg:mf} to exceed that of \Cref{alg:is}, we compare the leading order coefficients from \Cref{eq:perf_hi,eq:perf_mf}, requiring
\begin{equation}
\frac{\mathbf E(C_\mf) \mathbf E(w_\mf^2 \Delta^2)}{\mathbf E(w_\mf)^2}
<
\frac{\mathbf E(C_\hi) \mathbf E(w_\hi^2 \Delta^2)}{\mathbf E(w_\hi)^2}.
\label{eq:compare_coefficients}
\end{equation}
We note that $\mathbf E(w_\mf \mid \theta) = \pi(\theta) L_\mf(\theta) / q(\theta)$ and $\mathbf E(w_\hi \mid \theta) = \pi(\theta) L_\hi(\theta) / q(\theta)$.
Since $L_\mf = L_\hi$, as shown in \Cref{prop:mf_accuracy}, the denominators in \Cref{eq:compare_coefficients} are therefore equal.
Thus,
\[
\mathcal J_\mf = \mathbf E(C_\mf) \mathbf E(w_\mf^2 \Delta^2)
<
\mathbf E(C_\hi) \mathbf E(w_\hi^2 \Delta^2) = \mathcal J_\hi,
\]
is the condition for \Cref{alg:mf} to outperform \Cref{alg:is}.

Taking the right-hand side of this inequality first, clearly the expected simulation time is $\mathbf E(C_\hi) = \bar c_\hi$, for the constant $\bar c_\hi$ defined in \Cref{eq:c_hi}.
Similarly, we can write
\[
\mathbf E(w_\hi^2 \Delta^2) = \int \left( \frac{\pi(\theta)}{q(\theta)} \Delta(\theta) \right)^2 \left[ \int \omega_\hi(\theta, \mathbf y_\hi)^2  f_\hi(\mathbf y_\hi \mid \theta) \mathrm d \mathbf y_\hi \right] q(\theta) \mathrm d\theta = V_\hi,
\]
as given in \Cref{eq:V_hi}.
Thus, $\mathcal J_\hi = \bar c_\hi V_\hi$.

For the left-hand side of the performance inequality, we take each expectation in $\mathcal J_\mf$ in turn.
We first note that the expected iteration cost of \Cref{alg:mf}, $\mathbf E(C_\mf)$, is the sum of the expected cost of a single low-fidelity simulation, and the expected cost of $M$ high-fidelity simulations.
By definition, the expected cost of a single low-fidelity simulation $\mathbf y_\lo \sim f_\lo(\cdot \mid \theta)$ across $\theta \sim q(\cdot)$ is given by $\bar c_\lo$.
Thus the remaining cost, $\mathbf E(\delta C_\mf) = \mathbf E(C_\mf) - \bar c_\lo$, is the expected cost of $M$ high-fidelity simulations.
Conditioning on $\theta$, $\mathbf y_\lo$ and $M=m$, the expected remaining cost is, by definition,
\[
\mathbf E(\delta C_\mf \mid \theta, \mathbf y_\lo, M=m) = m c_\hi(\theta, \mathbf y_\lo).
\]
Taking expectations over the conditional distribution $M \sim \mathrm{Poi}(\mu(\theta, \mathbf y_\lo))$, we have 
\[
\mathbf E(\delta C_\mf \mid \theta, \mathbf y_\lo) = \mu(\theta, \mathbf y_\lo) c_\hi(\theta, \mathbf y_\lo).
\]
Finally, integrating this expression over the density $\rho$ in \Cref{eq:rho} gives the first factor of \Cref{eq:J_mf}.

It remains to show that
\[
\mathbf E(w_\mf^2 \Delta^2) = V_\mf + \iint \Delta_q(\theta)^2 \frac{\eta(\theta, \mathbf y_\lo)}{\mu(\theta, \mathbf y_\lo)} ~\rho(\theta, \mathbf y_\lo) \mathrm d\theta \mathrm d \mathbf y_\lo.
\]
We first condition on $\theta$, $\mathbf y_\lo$ and $M=m$, to write
\begin{align*}
\mathbf E(w_\mf^2 \Delta^2 \mid \theta, \mathbf y_\lo, m) 
&= \Delta_q^2 \mathbf E(\omega_\mf^2 \mid \theta, \mathbf y_\lo, m) \\
&= \Delta_q^2 \left[
	\omega_\lo^2 
	+ \frac{2}{\mu} \omega_\lo \mathbf E \left( D_m  \mid \theta, \mathbf y_\lo \right)
	+ \frac{1}{\mu^2} \mathbf E \left( D_m^2 \mid \theta, \mathbf y_\lo \right)
\right],
\end{align*}
for the random variable $D_m = \sum_{i=1}^m \left( \omega_{\hi,i} - \omega_\lo \right)$.
It is straightforward to show that 
\begin{align*}
\mathbf E(D_m \mid \theta, \mathbf y_\lo) &= m \mathbf E(\omega_\hi \mid \theta, \mathbf y_\lo) - m\omega_\lo, \\
\mathbf E(D_m^2 \mid \theta, \mathbf y_\lo) &= m \mathbf E \left( (\omega_\hi - \omega_\lo)^2 \mid \theta, \mathbf y_\lo \right) + (m^2 - m) \mathbf E(\omega_\hi - \omega_\lo \mid \theta, \mathbf y_\lo)^2,
\end{align*}
where we exploit the conditional independence of the high-fidelity simulations $\mathbf y_{\hi,i}$ and $\mathbf y_{\hi,j}$, for $i \neq j$.
On substitution of these conditional expectations, we then rearrange to write
\[
\mathbf E(w_\mf^2 \Delta^2 \mid \theta, \mathbf y_\lo, m)
= \Delta_q^2 \left[ 
	\left( 1 - \frac{2m}{\mu} \right) \omega_\lo^2 
	+ \frac{2m}{\mu} \omega_\lo \lambda_\hi
	+ \frac{m}{\mu^2} \mathrm{Var}(\omega_\hi - \omega_\lo \mid \theta, \mathbf y_\lo)
	+ \left( \frac{m(\lambda_\hi-\omega_\lo)}{\mu} \right)^2
\right],
\]
where we write the conditional expectation $\lambda_\hi(\theta, \mathbf y_\lo) = \mathbf E(\omega_\hi \mid \theta, \mathbf y_\lo)$.
At this point we can take expectations over $M$ and rearrange to give
\begin{align}
\mathbf E(w_\mf^2 \Delta^2 \mid \theta, \mathbf y_\lo)
&= 
\Delta_q^2 \left[
	2 \omega_\lo \lambda_\hi 
	- \omega_\lo^2
	+ \frac{1}{\mu} \mathrm{Var}(\omega_\hi - \omega_\lo \mid \theta, \mathbf y_\lo)
	+ \frac{(\mathrm{Var}(M \mid \theta, \mathbf y_\lo) + \mu^2) (\lambda_\hi - \omega_\lo)^2}{\mu^2}
\right] \nonumber \\
&= \Delta_q^2 \left[ \lambda_\hi^2 + \frac{1}{\mu} \left( \mathrm{Var}(\omega_\hi - \omega_\lo \mid \theta, \mathbf y_\lo) + \frac{\mathrm{Var}(M \mid \theta, \mathbf y_\lo)}{\mu} \left( \mathbf E(\omega_\hi - \omega_\lo \mid \theta, \mathbf y_\lo) \right)^2 \right) \right]. \label{eq:varstage}
\end{align}
Here, we can use the assumption that $M$ conditioned on $\theta$ and $\mathbf y_\lo$ is Poisson distributed, noting that the statement of \Cref{thm:performance} can be adapted for other conditional distributions of $M$ with different conditional variance functions.
Under the Poisson assumption, we can substitute $\mathrm{Var}(M \mid \theta, \mathbf y_\lo) = \mu(\theta, \mathbf y_\lo)$ to give
\[
\mathbf E(w_\mf^2 \Delta^2 \mid \theta, \mathbf y_\lo)
=
\Delta_q^2 \left[ \lambda_\hi^2 + \frac{\mathbf E\left(\left(\omega_\hi - \omega_\lo \right)^2 \mid \theta, \mathbf y_\lo \right)}{\mu(\theta, \mathbf y_\lo)} \right].
\]
Finally, we take expectations with respect to the probability density $\rho$ in \Cref{eq:rho}, and the product in \Cref{eq:J_mf} follows.
\end{proof}

\subsubsection{Alternative conditional distributions for $M$}

The proof above derives the performance measure $\mathcal J_\mf$ given in \Cref{eq:J_mf}, under the assumption that the conditional distribution of $M$, given $\theta$ and $\mathbf y_\lo$, is Poisson with mean $\mu(\theta, \mathbf y_\lo)$.
The following corollaries adapt the expression for $\mathcal J_\mf$ in the case of alternative conditional distributions for $M$.
We first define the MSE,
\[
E_\mf = \iint \left[ \lambda_\hi(\theta, \mathbf y_\lo) - \omega_\lo(\theta, \mathbf y_\lo) \right]^2 ~\rho(\theta, \mathbf y_\lo) \mathrm d\theta \mathrm d \mathbf y_\lo,
\]
between $\omega_\lo(\theta, \mathbf y_\lo)$ and $\lambda_\hi(\theta, \mathbf y_\lo) = \mathbf E(\omega_\hi \mid \theta, \mathbf y_\lo)$.

\begin{cor}
If $M \sim \mathrm{Bin}(M_{\max}, p(\theta, \mathbf y_\lo))$ is binomially distributed with maximum value $M_{\max}$ and mean $\mu(\theta, \mathbf y_\lo)$, where $p=\mu/M_{\max}$, then 
\begin{align}
\mathcal J_\mf[\mu] &= 
\left( \bar c_\lo + \iint \mu(\theta, \mathbf y_\lo) c_\hi(\theta, \mathbf y_\lo) ~\rho(\theta, \mathbf y_\lo) \mathrm d\theta \mathrm d\mathbf y_\lo \right) \nonumber \\
&\quad{} \times
\left( V_\mf - \frac{E_\mf}{M_{\max}} + \iint \Delta_q(\theta)^2 \frac{\eta(\theta, \mathbf y_\lo)}{\mu(\theta, \mathbf y_\lo)} ~\rho(\theta, \mathbf y_\lo) \mathrm d\theta \mathrm d \mathbf y_\lo \right). \label{eq:J_mf_Bin}
\end{align}
\end{cor}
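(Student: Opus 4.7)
The plan is to re-run the proof of \Cref{thm:performance} essentially verbatim, observing that the Poisson assumption on $M$ enters only at a single step, namely in evaluating $\mathrm{Var}(M\mid\theta,\mathbf y_\lo)/\mu$ within equation \eqref{eq:varstage}. All work preceding \eqref{eq:varstage} uses only the conditional mean $\mu(\theta,\mathbf y_\lo)$ and the conditional independence of the $\mathbf y_{\hi,i}$ given $(\theta,\mathbf y_\lo)$, neither of which changes in the binomial setting.

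First I would confirm that the first factor of $\mathcal J_\mf$ is unchanged. The expected per-iteration cost $\mathbf E(C_\mf)$ is computed via conditional expectations on $(\theta,\mathbf y_\lo,M=m)$, giving $m\,c_\hi(\theta,\mathbf y_\lo)$, and integrating out $M$ only requires its conditional mean $\mu(\theta,\mathbf y_\lo)$. Thus the first bracket in \eqref{eq:J_mf_Bin} matches that of \eqref{eq:J_mf}.

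Next, for the second factor I would substitute the binomial variance $\mathrm{Var}(M\mid\theta,\mathbf y_\lo) = \mu\,(1-\mu/M_{\max})$ into \eqref{eq:varstage}, so that $\mathrm{Var}(M)/\mu = 1-\mu/M_{\max}$. Using the identity $\mathrm{Var}(\omega_\hi-\omega_\lo\mid\theta,\mathbf y_\lo) = \eta(\theta,\mathbf y_\lo) - (\lambda_\hi-\omega_\lo)^2$ to eliminate the variance in favour of $\eta$, the bracket in \eqref{eq:varstage} simplifies to
\[
\lambda_\hi^2 + \frac{\eta(\theta,\mathbf y_\lo)}{\mu(\theta,\mathbf y_\lo)} - \frac{(\lambda_\hi(\theta,\mathbf y_\lo)-\omega_\lo(\theta,\mathbf y_\lo))^2}{M_{\max}},
\]
after the $(\lambda_\hi-\omega_\lo)^2$ contributions combine with cancelling signs. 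Integrating this expression against $\rho(\theta,\mathbf y_\lo)$ after multiplication by $\Delta_q(\theta)^2$ produces, respectively, $V_\mf$ from the first term (by \eqref{eq:V_mf}), the double integral of $\Delta_q^2\eta/\mu$ from the second term, and $-E_\mf/M_{\max}$ from the third term, which is exactly the second factor in \eqref{eq:J_mf_Bin}. Combining with the unchanged first factor yields the claimed formula.

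The main obstacle is nothing conceptual; it is the algebraic bookkeeping that combines the $-(\lambda_\hi-\omega_\lo)^2/\mu$ contribution (from expressing $\mathrm{Var}$ via $\eta$) with the $(1-\mu/M_{\max})(\lambda_\hi-\omega_\lo)^2/\mu$ contribution (from the binomial variance), verifying that the $1/\mu$ pieces cancel and leave a clean $-1/M_{\max}$ factor. It is also worth remarking explicitly that the construction requires $\mu(\theta,\mathbf y_\lo)\le M_{\max}$ so that $p=\mu/M_{\max}\in[0,1]$ is a valid binomial parameter; this constitutes a mild additional constraint on admissible mean functions relative to the Poisson case, but does not alter the derivation.
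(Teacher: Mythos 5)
Your argument is correct and essentially identical to the paper's own proof, which likewise just substitutes $\mathrm{Var}(M\mid\theta,\mathbf y_\lo)=\mu\left(1-\mu/M_{\max}\right)$ into \eqref{eq:varstage} and integrates against $\rho$. The only point worth flagging is that the residual term your algebra produces is $-\tfrac{1}{M_{\max}}\iint \Delta_q(\theta)^2\left(\lambda_\hi-\omega_\lo\right)^2\rho\,\mathrm d\theta\,\mathrm d\mathbf y_\lo$, which equals $-E_\mf/M_{\max}$ only if the $\Delta_q^2$ weight is understood to be part of $E_\mf$; your additional remark that $\mu(\theta,\mathbf y_\lo)\le M_{\max}$ is needed for $p$ to be a valid binomial parameter is a constraint the paper also notes.
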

\begin{proof}
We substitute $\mathrm{Var}(M \mid \theta, \mathbf y_\lo) = \mu \left( 1 - \mu/M_{\max} \right)$ into \Cref{eq:varstage}, and the result follows.
\end{proof}

We note in the result above that for $\mu$ to be the conditional mean of $M \sim \mathrm{Bin}(M_{\max}, p(\theta, \mathbf y_\lo))$, we must constrain the values of $\mu$ such that $\mu(\theta, \mathbf y_\lo) \in (0, M_{\max}]$.
This constraint alters the derivation of the optimal $\mu^\star$, in the case of a binomial conditional distribution with fixed $M_{\max}$.

\begin{cor}
If $M \sim \mathrm{Geo}(p(\theta, \mathbf y_\lo))$ is geometrically distributed on the non-negative integers, with mean $\mu(\theta, \mathbf y_\lo)$, where $p = 1/(1+\mu)$, then
\begin{subequations}
\begin{align}
\mathcal J_\mf[\mu] &= 
\left( \bar c_\lo + \iint \mu(\theta, \mathbf y_\lo) c_\hi(\theta, \mathbf y_\lo) ~\rho(\theta, \mathbf y_\lo) \mathrm d\theta \mathrm d\mathbf y_\lo \right) \nonumber \\
&\quad{} \times
\left( V_\mf + E_\mf + \iint \Delta_q(\theta)^2 \frac{\eta(\theta, \mathbf y_\lo)}{\mu(\theta, \mathbf y_\lo)} ~\rho(\theta, \mathbf y_\lo) \mathrm d\theta \mathrm d \mathbf y_\lo \right). \label{eq:J_mf_Geo}
\end{align}
\end{subequations}
\end{cor}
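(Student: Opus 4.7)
The plan is to reuse the entire proof of \Cref{thm:performance} up to and including \Cref{eq:varstage}, since that derivation never invokes the distributional form of $M$; it only uses the assumptions that $M\ge 0$ is integer-valued with conditional mean $\mu(\theta,\mathbf y_\lo)$ and that the high-fidelity simulations $\mathbf y_{\hi,i}$ are conditionally i.i.d.\ given $(\theta,\mathbf y_\lo)$. Consequently, the cost factor (the first factor of $\mathcal J_\mf$) is unchanged, since $\mathbf E(C_\mf\mid\theta,\mathbf y_\lo)=\bar c_\lo + \mu(\theta,\mathbf y_\lo)c_\hi(\theta,\mathbf y_\lo)$ depends only on the mean. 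So the first step is simply to note that only the second factor needs to be recomputed.

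Next, I would compute the variance of the geometric distribution on the non-negative integers in the parametrisation $p=1/(1+\mu)$ for which $\mathbf E(M\mid\theta,\mathbf y_\lo)=\mu$. A standard calculation gives $\mathrm{Var}(M\mid\theta,\mathbf y_\lo)=(1-p)/p^2=\mu(1+\mu)$, and therefore $\mathrm{Var}(M\mid\theta,\mathbf y_\lo)/\mu=1+\mu$. Substituting this factor into \Cref{eq:varstage} and using the identity $\mathrm{Var}(\omega_\hi-\omega_\lo\mid\theta,\mathbf y_\lo)=\eta(\theta,\mathbf y_\lo)-(\lambda_\hi(\theta,\mathbf y_\lo)-\omega_\lo(\theta,\mathbf y_\lo))^2$ (valid because $\omega_\lo$ is deterministic given $(\theta,\mathbf y_\lo)$), the bracket in \Cref{eq:varstage} collapses to
\[
\frac{1}{\mu}\bigl[\eta-(\lambda_\hi-\omega_\lo)^2\bigr]+\frac{1+\mu}{\mu}(\lambda_\hi-\omega_\lo)^2
=\frac{\eta(\theta,\mathbf y_\lo)}{\mu(\theta,\mathbf y_\lo)}+(\lambda_\hi(\theta,\mathbf y_\lo)-\omega_\lo(\theta,\mathbf y_\lo))^2.
\]
Hence $\mathbf E(w_\mf^2\Delta^2\mid\theta,\mathbf y_\lo)=\Delta_q(\theta)^2\bigl[\lambda_\hi^2+(\lambda_\hi-\omega_\lo)^2+\eta/\mu\bigr]$.

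Finally, I would take the expectation with respect to the joint density $\rho(\theta,\mathbf y_\lo)$. The $\lambda_\hi^2$ term integrates to $V_\mf$ by \Cref{eq:V_mf}, the $(\lambda_\hi-\omega_\lo)^2$ term integrates to $E_\mf$ by the definition stated immediately before the corollary, and the remaining term contributes the $\eta/\mu$ integral appearing in \Cref{eq:J_mf_Geo}. Multiplying by the (unchanged) cost factor yields \Cref{eq:J_mf_Geo}.

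There is no serious analytic obstacle — the proof is a routine re-run of \Cref{thm:performance} with one distributional parameter swapped. The only bookkeeping point worth checking carefully is the sign and scaling of the $E_\mf$ contribution: the geometric distribution is overdispersed relative to Poisson ($\mathrm{Var}(M)=\mu+\mu^2>\mu$), so the extra term appears with a positive coefficient, in contrast to the binomial case where underdispersion produces the $-E_\mf/M_{\max}$ correction of \Cref{eq:J_mf_Bin}. One should also verify that the parametrisation $p=1/(1+\mu)$ is the one consistent with the geometric distribution supported on $\{0,1,2,\dots\}$ rather than $\{1,2,\dots\}$, since the latter would shift the mean by one and alter the cost factor.
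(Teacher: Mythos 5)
Your proof is correct and follows exactly the paper's route: the paper's own proof is the one-liner ``substitute $\mathrm{Var}(M\mid\theta,\mathbf y_\lo)=\mu(1+\mu)$ into \Cref{eq:varstage}, and the result follows'', and your expansion of that substitution --- including the decomposition $\mathrm{Var}(\omega_\hi-\omega_\lo\mid\theta,\mathbf y_\lo)=\eta-(\lambda_\hi-\omega_\lo)^2$ and the check that the mean-$\mu$ parametrisation of the geometric distribution on $\{0,1,2,\dots\}$ has variance $\mu(1+\mu)$ --- is precisely what ``the result follows'' elides. The only caveat, which you share with the paper itself, is that the term you obtain is $\iint\Delta_q(\theta)^2(\lambda_\hi-\omega_\lo)^2\,\rho(\theta,\mathbf y_\lo)\,\mathrm d\theta\,\mathrm d\mathbf y_\lo$, whereas $E_\mf$ as literally defined before the corollary omits the factor $\Delta_q(\theta)^2$, so the identification with $E_\mf$ requires reading that factor into the definition.
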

\begin{proof}
We substitute $\mathrm{Var}(M \mid \theta, \mathbf y_\lo) = \mu \left( 1 + \mu \right)$ into \Cref{eq:varstage}, and the result follows.
\end{proof}

\subsection{\Cref{lem:mustar}}
We return to the assumption that $M$ is conditionally Poisson distributed, given $\theta$ and $\mathbf y_\lo$.
\begin{proof}
We write the functional $\mathcal J_\mf[\mu] = \mathcal C[\mu] \mathcal V[\mu]$ in \Cref{eq:J_mf} as the product of functionals,
\begin{subequations}
\label{eq:CV}
\begin{align}
\mathcal C[\mu] &= \bar c_\lo + \iint \mu c_\hi ~\rho \mathrm d\theta \mathrm d\mathbf y_\lo, \\
\mathcal V[\mu] &= V_\mf + \iint \Delta_q^2 \frac{\eta}{\mu} ~\rho\mathrm d\theta \mathrm d \mathbf y_\lo.
\end{align}
\end{subequations}
Standard `product rule' results from calculus of variations allows us to write the functional derivative of $\mathcal J_\mf$ with respect to $\mu$ as
\begin{align*}
\frac{\delta \mathcal J_\mf}{\delta \mu}
&=
\mathcal V[\mu] \frac{\delta \mathcal C}{\delta \mu} + \mathcal C[\mu] \frac{\delta \mathcal V}{\delta \mu} \\
&= \mathcal V[\mu] c_\hi \rho - \mathcal C[\mu] \frac{\Delta_q^2 \eta \rho}{\mu^2}.
\end{align*}
Setting this functional derivative to zero, the optimal function, $\mu^\star$, satisfies
\begin{equation}
\mu^\star(\theta, \mathbf y_\lo)^2 = \frac{\mathcal C[\mu^\star]}{\mathcal V[\mu^\star]} \frac{ \Delta_q(\theta)^2 \eta(\theta, \mathbf y_\lo)}{c_\hi(\theta, \mathbf y_\lo)}.
\label{eq:mustar_almost}
\end{equation}
The result in \Cref{eq:mustar} follows on showing that $C[\mu^\star] / V[\mu^\star] = \bar c_\lo / V_\mf$.

On substituting \Cref{eq:mustar_almost} into \Cref{eq:CV} we find
\begin{align*}
\mathcal C[\mu^\star] &= \bar c_\lo + \sqrt{\frac{\mathcal C[\mu^\star]}{\mathcal V[\mu^\star]}} \iint \sqrt{\Delta_q^2 \eta c_\hi} ~\rho\mathrm d\theta \mathrm d\mathbf y_\lo, \\
\mathcal V[\mu^\star] &= V_\mf + \sqrt{\frac{\mathcal V[\mu^\star]}{\mathcal C[\mu^\star]}} \iint \sqrt{\Delta_q^2 \eta c_\hi} ~\rho\mathrm d\theta \mathrm d\mathbf y_\lo,
\end{align*}
from which it follows that
\[
\sqrt{\frac{\mathcal V[\mu^\star]}{\mathcal C[\mu^\star]}} \bar c_\lo
=
\sqrt{\frac{\mathcal C[\mu^\star]}{\mathcal V[\mu^\star]}} V_\mf
=
\sqrt{\mathcal C[\mu^\star] \mathcal V[\mu^\star]} - \iint \sqrt{\Delta_q^2 \eta c_\hi} ~\rho\mathrm d\theta \mathrm d\mathbf y_\lo.
\]
Multiplying this equation by $\sqrt{\mathcal C[\mu^\star] \mathcal V[\mu^\star]}$, we have $\mathcal V[\mu^\star] \bar c_\lo = \mathcal C[\mu^\star] V_\mf$, and thus \Cref{eq:mustar} follows from \Cref{eq:mustar_almost}.
\end{proof}

\subsection{\Cref{cor:muexist}}
\label{s:muexist}
\begin{proof}
On substituting \Cref{eq:mustar} into \Cref{eq:CV}, we find that the condition $\mathcal J_\mf^\star = \mathcal J_\mf[\mu^\star] < \mathcal J_\hi = \bar c_\hi V_\hi$ is equivalent to
\[
\left( \sqrt{\bar c_\lo V_\mf} + \iint \sqrt{\Delta_q(\theta)^2 \eta(\theta, \mathbf y_\lo) c_\hi(\theta, \mathbf y_\lo)} ~\rho(\theta, \mathbf y_\lo) \mathrm d\theta \mathrm d\mathbf y_\lo \right)^2 < \bar c_\hi V_\hi.
\]
A simple rearrangement of this inequality gives the inequality in \Cref{eq:muexist}.
\end{proof}

To interpret the condition
\[
\sqrt{\frac{\bar c_\lo}{\bar c_\hi} \frac{V_\mf}{V_\hi}} + \iint \sqrt{\frac{\Delta_q(\theta)^2 \eta(\theta, \mathbf y_\lo)}{V_\hi}} \sqrt{\frac{c_\hi(\theta, \mathbf y_\lo)}{\bar c_\hi}} ~\rho(\theta, \mathbf y_\lo) \mathrm d\theta \mathrm d\mathbf y_\lo
< 1
\]
in \Cref{eq:muexist}, we note that the first term is determined by (a) our assumption of a significant reduction in simulation burden of the low-fidelity model over the high-fidelity model, $\bar c_\lo < \bar c_\hi$, and (b) the ratio of the two integrals,
\[
\frac{V_\mf}{V_\hi} = \frac{
	\int \Delta_q(\theta)^2 \mathbf E( \mathbf E(\omega_\hi \mid \theta, \mathbf y_\lo)^2 \mid \theta) q(\theta) ~\mathrm d\theta
}{
	\int \Delta_q(\theta)^2 \mathbf E(\omega_\hi^2 \mid \theta) q(\theta) ~\mathrm d\theta
}.
\]
Exploiting the law of total variance, we note that
\begin{align*}
\mathbf E(\omega_\hi^2 \mid \theta) &= \mathrm{Var}(\omega_\hi \mid \theta) + L_\hi(\theta)^2, \\
\mathbf E(\mathbf E(\omega_\hi \mid \theta, \mathbf y_\lo)^2 \mid \theta) &= \mathrm{Var}( \mathbf E(\omega_\hi \mid \theta, \mathbf y_\lo) \mid \theta) + L_\hi(\theta)^2 \\
&= \mathbf E(\omega_\hi^2 \mid \theta) - \mathbf E( \mathrm{Var}(\omega_\hi \mid \theta, \mathbf y_\lo) \mid \theta).
\end{align*}
These equalities imply that
\[
\mathbf E(\omega_\hi \mid \theta)^2
\leq
\mathbf E(\mathbf E(\omega_\hi \mid \theta, \mathbf y_\lo)^2 \mid \theta)
\leq
\mathbf E(\omega_\hi^2 \mid \theta),
\]
where the lower bound is achieved for $\mathbf y_\hi$ independent of $\mathbf y_\lo$,
while the upper bound would be achieved if $\mathbf y_\hi$ were a deterministic function of $\mathbf y_\lo$.
In particular, $V_\mf / V_\hi \leq 1$, and so the first term of \Cref{eq:muexist} is small whenever the low-fidelity model provides significant computational savings versus the high-fidelity model.

The second term in \Cref{eq:muexist} quantifies the detriment to the performance of \Cref{alg:mf} that arises from the inaccuracy of $\omega_\lo$ as an estimate of $\omega_\hi$.
The function $\eta(\theta, \mathbf y_\lo) = \mathbf E((\omega_\hi - \omega_\lo)^2 \mid \theta, \mathbf y_\lo)$ is integrated across the density $\rho$, weighted by the relative computational cost of the high-fidelity simulation, $c_\hi(\theta, \mathbf y_\lo) / \bar c_\hi$, and by the contribution of $G(\theta)$ to the variance of the estimated posterior expectation of $G$.
We can conclude that the multifidelity approach requires that the low-fidelity model is accurate in the regions of parameter space where high-fidelity simulations are particularly expensive.

To summarise: if 
(a) the ratio between average low-fidelity simulation costs and high-fidelity simulation costs is suitably small, and
(b) the average disagreement between likelihood-free weightings, as measured by $\eta$, is suitably small,
then \Cref{eq:muexist} will be satisfied and thus a mean function, $\mu^\star$, exists such that \Cref{alg:mf} is more efficient than \Cref{alg:is}.

\newpage
\section{Mean functions}
\label{s:mean_eg}

\begin{algorithm}[h!]
\caption{Piecewise constant mean function $\mu_{\mathcal D}(\theta, y_\lo; \nu)$ used in MF-ABC \Cref{alg:mf2}, depicted in \Cref{fig:ABC}c, at final iteration.}
\label{alg:mu_ABC}
\begin{algorithmic}
\Require $\theta=(k_1, k_{-1}, k_2)$; $y_{\lo} = (y_1, y_2, \dots, y_{10})$.
\If{$ y_{7} \leq 13.867 $}
\State \Return $\nu_{1} = 0.084$.
\Else
\If{$ k_2 \leq 1.14 $
}\If{$ y_{8} \leq 16.219 $}
\State \Return $\nu_{2} = 0.616$.
\Else
\If{$ k_2 \leq 0.88 $
}\State \Return $\nu_{3} = 0.08$.
\Else
\If{$ y_{10} \leq 26.208 $}
\If{$ k_{1} \leq 91.265 $
}\State \Return $\nu_{4} = 0.313$.
\Else
\State \Return $\nu_{5} = 0.76$.
\EndIf
\Else
\If{$ y_{7} \leq 15.151 $}
\State \Return $\nu_{6} = 0.761$.
\Else
\If{$ y_{1} \leq 3.371 $}
\If{$ y_{5} \leq 11.264 $}
\State \Return $\nu_{7} = 0.688$.
\Else
\State \Return $\nu_{8} = 0.467$.
\EndIf
\Else
\State \Return $\nu_{9} = 0.797$.
\EndIf
\EndIf
\EndIf
\EndIf
\EndIf
\Else
\If{$ y_{10} \leq 26.136 $}
\If{$ y_{7} \leq 14.17 $}
\State \Return $\nu_{10} = 0.929$.
\Else
\State \Return $\nu_{11} = 0.828$.
\EndIf
\Else
\State \Return $\nu_{12} = 1.439$.
\EndIf
\EndIf
\EndIf
\end{algorithmic}

\end{algorithm}

%\begin{algorithm}[h!]
%\caption{Piecewise constant mean function $\mu_{\mathcal D}(\theta, y_\lo; \nu)$ used in MF-BSL \Cref{alg:mf2}, depicted in \Cref{fig:BSL}c, at final iteration.}
%\label{alg:mu_BSL}
%\include{mu_BSL}
%\end{algorithm}

%\begin{acknowledgements} % will be removed in pdf for initial submission,
%              % so you can already fill it to test with the
%              % ‘accepted’ class option
%Thanks to everyone.
%\end{acknowledgements}

\newpage
\bibliography{refs}
\end{document}